\documentclass[10pt,english,onecolumn]{IEEEtran}
\usepackage{mathptmx}
\usepackage[T1]{fontenc}
\usepackage[latin9]{inputenc}
\usepackage{geometry}
\geometry{verbose,letterpaper,tmargin=0.75in,bmargin=0.7in,lmargin=0.75in,rmargin=0.75in}
\usepackage{units}
\usepackage{amsmath}
\usepackage{graphicx}
\usepackage{amssymb}

\newtheorem{example}{Example}
\newtheorem{condition}{Condition}
\newtheorem{definitn}{Definition}
\newtheorem{prop}{Proposition}
\newtheorem{thm}{Theorem}
\newtheorem{remrk}{Remark}

\usepackage{wrapfig}
\usepackage{tikz}  
\usetikzlibrary{plotmarks}
\DeclareMathAlphabet{\mathcal}{OMS}{cmsy}{m}{n}

\usepackage{babel}

\begin{document}

\title{A Rate-Distortion Perspective on Multiple Decoding Attempts for Reed-Solomon
Codes}

\author{Phong S. Nguyen, Henry D. Pfister, and Krishna R. Narayanan\\
{\normalsize Department of Electrical and Computer Engineering,
Texas A\&M University }\\
{\normalsize College Station, TX 77840, U.S.A. }\\
{\normalsize \{psn, hpfister, krn\}@tamu.edu }}
\maketitle
\begin{abstract}
Recently, a number of authors have proposed decoding schemes for Reed-Solomon
(RS) codes based on multiple trials of a simple RS decoding algorithm.
In this paper, we present a rate-distortion (R-D) approach to analyze
these multiple-decoding algorithms for RS codes. This approach is
first used to understand the asymptotic performance-versus-complexity
trade-off of multiple error-and-erasure decoding of RS codes. By defining
an appropriate distortion measure between an error pattern and an
erasure pattern, the condition for a single error-and-erasure decoding
to succeed reduces to a form where the distortion is compared to a
fixed threshold. Finding the best set of erasure patterns for multiple
decoding trials then turns out to be a covering problem which can
be solved asymptotically by rate-distortion theory. Next, this approach
is extended to analyze multiple algebraic soft-decision (ASD) decoding
of RS codes. Both analytical and numerical computations of the R-D
functions for the corresponding distortion measures are discussed.
Simulation results show that proposed algorithms using this approach
perform better than other algorithms with the same complexity.
\end{abstract}
\thispagestyle{empty}\pagestyle{empty}

\global\long\def\half{}

\section{Introduction}

Reed-Solomon (RS) codes are one of the most widely used error-correcting
codes in digital communication and data storage systems. This is primarily
due to the fact that RS codes are maximum distance separable (MDS)
codes, can correct long bursts of errors, and have efficient hard-decision
decoding (HDD) algorithms, such as the Berlekamp-Massey (BM) algorithm,
which can correct up to half the minimum distance ($d_{min})$ of
the code. An~$(n,k)$~RS code of length~$n$~and dimension~$k$
is known to have~$d_{min}=n-k+1$~due to its MDS nature.

Since the arrival of RS codes, people have put a considerable effort
into improving the decoding performance at the expense of complexity.
A breakthrough result of Guruswami and Sudan (GS) introduces a hard-decision
list-decoding algorithm based on algebraic bivariate interpolation
and factorization techniques that can correct errors beyond half the
minimum distance of the code \cite{Guruswami-it99}. Nevertheless,
HDD algorithms do not fully exploit the information provided by the
channel output. Koetter and Vardy (KV) later extended the GS decoder
to an algebraic soft-decision (ASD) decoding algorithm by converting
the probabilities observed at the channel output into algebraic interpolation
conditions in terms of a multiplicity matrix \cite{Koetter-it03}.
Both of these algorithms however have significant computational complexity.
Thus, multiple runs of error-and-erasure and error-only decoding with
some low complexity algorithm, such as the BM algorithm, has renewed
the interest of researchers. These algorithms essentially first construct
a set of either erasure patterns \cite{Forney-it66,Lee-globecom08},
test patterns \cite{Chase-it72}, or patterns combining both \cite{Tang-comlett01}
and then attempt to decode using each pattern. There has also been
recent interest in lowering the complexity per decoding trial as can
be seen in \cite{Bellorado-isit06,Xia-com07,Xia-icc08}.

In the scope of multiple error-and-erasure decoding, there have been
several algorithms using different sets of erasure patterns. After
multiple decoding trials, these algorithms produce a list of candidate
codewords and then pick the best codeword on this list, whose size
is usually small. The nature of multiple error-and-erasure decoding
is to erase some of the least reliable symbols since those symbols
are more prone to be erroneous. The first algorithm of this type is
called Generalized Minimum Distance (GMD) \cite{Forney-it66} and
it repeats error-and-erasure decoding while successively erasing an
even number of the least reliable positions (LRPs) (assuming that~$d_{min}$~is
odd). More recent work by Lee and Kumar \cite{Lee-globecom08} proposes
a soft-information successive (multiple) error-and-erasure decoding
(SED) that achieves better performance but also increases the number
of decoding attempts. Literally, the Lee-Kumar's SED$(l,f)$~algorithm
runs multiple error-and-erasure decoding trials with every combination
of an even number~$\leq f$~of erasures within the~$l$~LRPs. 

A natural question that arises is how to construct the {}``best''
set of erasure patterns for multiple error-and-erasure decoding. Inspired
by this, we first design a rate-distortion framework to analyze the
asymptotic trade-off between performance and complexity of multiple
error-and-erasure decoding of RS codes. The framework is also extended
to analyze multiple algebraic soft-decision decoding (ASD). Next,
we proposed a group of multiple-decoding algorithms based on this
approach that achieve better performance-versus-complexity trade-off
than other algorithms. The multiple-decoding algorithm that achieves
the best trade-off turns out to be a multiple error-only decoding
using the set of patterns generated by random codes combining with
covering codes. These are the main results of this paper.

\subsection{Outline of the paper}

The paper is organized as follows. In Section \ref{sec:MultipleBMA},
we design an appropriate distortion measure and present a rate-distortion
framework to analyze the performance-versus-complexity trade-off of
multiple error-and-erasure decoding of RS codes. Also in this section,
we propose a general multiple-decoding algorithm that can be applied
to error-and-erasure decoding. Then, in Section \ref{sec:Computing-RD},
we discuss a numerical computation of R-D function which is needed
for the proposed algorithm. In Section \ref{sec:Multiple ASD}, we
analyze both bit-level and symbol-level ASD decoding and design distortion
measures so that they can fit into the general algorithm. In Section
\ref{sec:Ext-and-Gen}, we offer some extensions that help the algorithm
achieve better performance and running time. Simulation results are
presented in Section \ref{sec:Simulation-results} and finally, conclusion
is provided in Section \ref{sec:Conclusion}.

\section{Multiple Error-and-Erasure Decoding\label{sec:MultipleBMA}}

In this section, we set up a rate-distortion framework to analyze
multiple attempts of conventional hard decision error-and-erasure
decoding.

Let~$\mathbb{F}_{2^{q}}$~be the Galois field with~$2^{q}$~elements
denoted as~$\alpha_{1},\alpha_{2},\ldots,\alpha_{2^{q}}$. We consider
an $(n,k)$~RS code of length~$n$, dimension~$k$ over~$\mathbb{F}_{2^{q}}$.
Assume that we transmit a codeword~$\mathbf{c}=(c_{1},c_{2},\ldots,c_{n})\in\mathbb{F}_{2^{q}}^{n}$~over
some channel and receive a vector~$\mathbf{r}=(r_{1},r_{2},\ldots,r_{n})\in\mathcal{Y}^{n}$~where~$\mathcal{Y}$~is
the receive alphabet for a single RS symbol. In this paper, we assume
that $\mathcal{Y}=\mathbb{R}^{q}$ and all simulations are based on
transmitting each of the $q$ bits in a symbol using Binary Phase-Shift
Keying (BPSK) on an Additive White Gaussian Noise (AWGN) channel.
For each codeword index~$i$, let~$\pi_{i}:\{1,2,\ldots,2^{q}\}\rightarrow\{1,2,\ldots,2^{q}\}$~be
the permutation given by sorting~$p_{i,j}=\Pr(c_{i}=\alpha_{j}|\mathbf{r})$~in
decreasing order so that~$p_{i,\pi_{i}(1)}\geq p_{i,\pi_{i}(2)}\geq\ldots\geq p_{i,\pi_{i}(2^{q})}$.
Then, we can specify~$y_{i,j}=\alpha_{\pi_{i}(j)}$~as the~$j$-th
most reliable symbol for~$j=1,\ldots,2^{q}$~at codeword index~$i$.
To obtain the reliability of the codeword positions (indices), we
construct the permutation~$\sigma:\{1,2,\ldots,n\}\rightarrow\{1,2,\ldots,n\}$~given
by sorting the probabilities~$p_{i,\pi_{i}(1)}$~of the most likely
symbols in increasing order. Thus, codeword position~$\sigma(i)$~is
the~$i$-th LRP. These above notations will be used throughout this
paper.
\begin{example}
Consider~$n=3$~and~$q=2$. Assume that we have the probability~$p_{i,j}$
written in a matrix form as follows\[
\mathbf{P}=\left(\begin{array}{ccc}
0.01 & 0.01 & \mathbf{0.93}\\
\mathbf{0.94} & 0.03 & 0.04\\
0.03 & \mathbf{0.49} & 0.01\\
0.02 & 0.47 & 0.02\end{array}\right)\,\,\mbox{\mbox{where}}\,\, p_{i,j}=[\mathbf{P}]_{j,i}\]

then~$\pi_{1}(1,2,3,4)=(2,3,4,1),\pi_{2}(1,2,3,4)=(3,4,2,1),\pi_{3}(1,2,3,4)=(1,2,4,3)$~and~$\sigma(1,2,3)=(2,3,1)$.\end{example}
\begin{condition}
\label{con:BMAerr-n-era}(Classical decoding threshold, see \cite{Lin-1983,Blahut-2003}):
If~$e$~symbols are erased, a conventional hard-decision error-and-erasure
decoder such as the BM algorithm is able to correct~$\nu$~errors
in unerased positions if \begin{equation}
2\nu+e<n-k+1.\label{eq:scbma_0}\end{equation}

\end{condition}

\subsection{Conventional error and erasure patterns.}
\begin{definitn}
\label{def:(Conv. patterns)}(Conventional error and erasure patterns)
We define~$x^{n}\in\mathbb{Z}_{2}^{n}\triangleq\{0,1\}^{n}$~and~$\hat{x}^{n}\in\mathbb{Z}_{2}^{n}$~as
an error pattern and an erasure pattern respectively, where~$x_{i}=0$~means
that an error occurs (i.e. the most likely symbol is incorrect) and~$\hat{x}_{i}=0$~means
that an erasure occurs at index~$i$. \end{definitn}
\begin{example}
If~$d_{min}$~is odd then $\{111111\ldots,001111\ldots,000011\ldots,\ldots\}$
is the set of erasure patterns for the GMD algorithm. For the SED$(3,2)$~algorithm,
the set of erasure patterns has the form~$\{111111\ldots,001111\ldots,010111\ldots,100111\ldots\}$.
Here, in each erasure pattern the letters are written in increasing
reliability order of the codeword positions.
\end{example}
Let us revisit the question how to construct the best set of erasure
patterns for multiple error-and-erasure decoding. First, it can be
seen that a multiple error-and-erasure decoding succeeds if the condition
(\ref{eq:scbma_0}) is satisfied during at least one round of decoding.
Thus, our approach is to design a distortion measure that converts
the condition (\ref{eq:scbma_0}) into a form where the distortion
between an error pattern~$x^{n}$~and an erasure pattern~$\hat{x}^{n}$,
denoted as~$d(x^{n},\hat{x}^{n})$, is less than a fixed threshold.
\begin{definitn}
Given a \emph{letter-by-letter }distortion measure~$\delta$, the
distortion between an error pattern~$x^{n}$~and an erasure pattern~$\hat{x}^{n}$~is
defined by\[
d(x^{n},\hat{x}^{n})=\sum_{i=1}^{n}\delta(x_{i},\hat{x}_{i}).\]
\end{definitn}
\begin{prop}
\label{prop:bma1}If we choose the \emph{letter-by-letter} distortion
measure~$\delta:\mathbb{Z}_{2}\times\mathbb{Z}_{2}\rightarrow\mathbb{R}_{\geq0}$~
as follows\begin{equation}
\begin{array}{cc}
\delta(0,0)=1 & \delta(0,1)=2\\
\delta(1,0)=1 & \delta(1,1)=0\end{array}\label{eq:dstfnBMA}\end{equation}
then the condition (\ref{eq:scbma_0}) for a successful error-and-erasure
decoding then reduces to the form where the distortion is less than
a fixed threshold\[
d(x^{n},\hat{x}^{n})<n-k+1.\]
\end{prop}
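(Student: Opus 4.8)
The plan is to show that the left-hand side of the classical threshold in Condition~\ref{con:BMAerr-n-era}, namely $2\nu+e$, is \emph{identically equal} to the accumulated distortion $d(x^n,\hat{x}^n)$ under the choice of $\delta$ in~(\ref{eq:dstfnBMA}); once this identity is established, the equivalence of the two inequalities is immediate. The entire content of the proposition is therefore a bookkeeping argument that translates the combinatorial quantities $\nu$ and $e$ into a sum of letter-by-letter contributions.

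First I would partition the $n$ codeword indices into four classes according to the pair $(x_i,\hat{x}_i)$, writing $n_{ab}$ for the number of indices with $x_i=a$ and $\hat{x}_i=b$, where $a,b\in\{0,1\}$. Using Definition~\ref{def:(Conv. patterns)}, I would then read off the two quantities appearing in~(\ref{eq:scbma_0}) directly from these counts. The number of erased symbols is the number of positions with $\hat{x}_i=0$, so $e=n_{00}+n_{10}$. The number of errors lying in \emph{unerased} positions is the number of indices that are simultaneously in error ($x_i=0$) and not erased ($\hat{x}_i=1$), so $\nu=n_{01}$. Substituting gives $2\nu+e=2n_{01}+n_{00}+n_{10}$.

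Next I would evaluate the distortion straight from its definition as a letter-by-letter sum, collecting equal terms, so that
\[
d(x^n,\hat{x}^n)=n_{00}\,\delta(0,0)+n_{01}\,\delta(0,1)+n_{10}\,\delta(1,0)+n_{11}\,\delta(1,1).
\]
Inserting the values from~(\ref{eq:dstfnBMA}), the $(1,1)$ term vanishes, the $(0,0)$ and $(1,0)$ terms each carry unit weight, and the $(0,1)$ term carries weight two, giving $d(x^n,\hat{x}^n)=n_{00}+2n_{01}+n_{10}$. This coincides term-for-term with the expression for $2\nu+e$ found above, so $d(x^n,\hat{x}^n)=2\nu+e$, and hence $2\nu+e<n-k+1$ holds if and only if $d(x^n,\hat{x}^n)<n-k+1$.

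Since the argument is a direct verification, there is no deep obstacle; the only point that demands care is the semantic matching in the second step. One must recognize that an error falling on an erased coordinate (the $(0,0)$ class) should \emph{not} be counted toward $\nu$: it is absorbed by the erasure and contributes only to $e$, which is precisely why the table sets $\delta(0,0)=1$ rather than $2$, while an uncorrected error in an unerased position consumes twice the decoding budget and so earns $\delta(0,1)=2$. Making these four entries line up with the decomposition of $2\nu+e$ is the whole point of the claim; once the counts $n_{ab}$ are named correctly, the two expressions are visibly the same.
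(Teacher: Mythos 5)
Your argument is correct and is essentially identical to the paper's own proof: you partition the indices by the pair $(x_i,\hat{x}_i)$ (your counts $n_{ab}$ are the paper's $\chi_{s,t}$), identify $e=n_{00}+n_{10}$ and $\nu=n_{01}$, and observe that both $2\nu+e$ and $d(x^n,\hat{x}^n)$ equal $2n_{01}+n_{00}+n_{10}$. No gaps; the additional remarks on the semantics of $\delta(0,0)=1$ versus $\delta(0,1)=2$ are a correct reading of why the table is chosen as it is.
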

\begin{proof}
First, we define $\chi_{s,t}\triangleq\left|\{i\in\left\{ 1,2,\ldots,n\right\} :x_{i}=s,\hat{x}_{i}=t\}\right|$
to count the number of~$(x_{i},\hat{x}_{i})$~pairs equal to~$(s,t)$~for
every~$s,t\in\{0,1\}$. Noticing that~$e=\chi_{0,0}+\chi_{1,0}$~and~$\nu=\chi_{0,1}$,
the condition (\ref{eq:scbma_0}) for one error-and-erasure decoding
attempt to succeed becomes ~$2\chi_{0,1}+\chi_{0,0}+\chi_{1,0}<n-k+1$.
By seeing that~$d(x^{n},\hat{x}^{n})=2\chi_{0,1}+\chi_{0,0}+\chi_{1,0}$~we
conclude the proof. \vspace{2mm}
\end{proof}
Next, we try to maximize the chance that this successful decoding
condition is satisfied by at least one of the decoding attempts (i.e.~$d(x^{n},\hat{x}^{n})<n-k+1$~for
at least one erasure patterns~$\hat{x}^{n}$). Mathematically, we
want to build a set~$\mathcal{B}$~of no more than~$2^{R}$~erasure
patterns~$\hat{x}^{n}$~in order to \begin{equation}
\max_{\mathcal{B}:|\mathcal{B}|\leq2^{R}}\Pr\{\min_{\hat{x}^{n}\in\mathcal{B}}d(x^{n},\hat{x}^{n})<n-k+1\}.\label{eq:ProbStatement}\end{equation}
The exact answer to this problem is difficult to find. However, one
can see it as a covering problem where one wants to cover the space
of error patterns using a minimum number of balls centered at the
chosen erasure patterns. 

This view leads to an asymptotic solution of the problem based on
rate-distortion theory. More precisely, we view the error pattern~$x^{n}$~as
a source sequence and the erasure pattern~$\hat{x}^{n}$~as a reproduction
sequence. \begin{wrapfigure}{r}{7.2cm}
\begin{tikzpicture}[scale=1.5]
\tikzstyle{mybox} = [draw=black, fill=blue!05,rectangle,rounded corners]
\fill [green!5] (0.5,0.2) circle (15pt); \fill [green!5] (0,1) circle (15pt); \fill [green!5] (0.9,0.9) circle (15pt); \fill [green!5] (1.4,0.2) circle (15pt); \fill [green!5] (1.8,1) circle (15pt);
\fill [blue] (0.5,0.2) circle (1pt); \draw[thick] (0.5,0.2) circle (15pt);
\fill [blue] (0,1) circle (1pt); \draw[thick] (0,1) circle (15pt);
 \fill [blue] (0.9,0.9) circle (1pt);  \draw[thick] (0.9,0.9) circle (15pt);
 \fill [blue] (1.4,0.2) circle (1pt);  \draw[thick] (1.4,0.2) circle (15pt);
 \fill [blue] (1.8,1) circle (1pt);  \draw[thick] (1.8,1) circle (15pt);
\filldraw [fill=red!40] (1,-0.05) rectangle (1.05,0); \filldraw [fill=red!40] (0.05,-0.05) rectangle (0.1,0); \filldraw [fill=red!40] (-0.35,1.35) rectangle (-0.3,1.4); \filldraw [fill=red!40] (2.2,1.2) rectangle (2.25,1.25); \filldraw [fill=red!40] (1.7,-0.2) rectangle (1.75,-0.15); \filldraw [fill=red!40] (1,1.35) rectangle (1.05,1.4); \filldraw [fill=red!40] (2,1.25) rectangle (2.05,1.3); \filldraw [fill=red!40] (0.7,0.55) rectangle (0.75,0.6); \filldraw [fill=red!40] (1.25,0.65) rectangle (1.3,0.7); \filldraw [fill=red!40] (1.05,0.5) rectangle (1.1,0.55); \filldraw [fill=red!40] (-0.15,0.55) rectangle (-0.1,0.6); \filldraw [fill=red!40] (0.3,1.325) rectangle (0.35,1.375); \filldraw [fill=red!40] (0.45,1.05) rectangle (0.5,1.1); \filldraw [fill=red!40] (0.7,-0.25) rectangle (0.75,-0.2); \filldraw [fill=red!40] (1.85,0.1) rectangle (1.9,0.15); \filldraw [fill=red!40] (2.1,0.7) rectangle (2.15,0.75);
\node [mybox] (box) at (3.2,0) {    \begin{minipage}[t!]{0.15\textwidth}         Error pattern\\         Erasure pattern     \end{minipage}     };
\filldraw [fill=red!40] (3.875,0.125) rectangle (3.925,0.175); \fill [blue] (3.9,-0.125) circle (1pt);
\end{tikzpicture}
\end{wrapfigure} Rate-distortion theory shows that the set~$\mathcal{B}$~of~$2^{R}$~reproduction
sequences can be generated randomly so that\[
\lim_{n\rightarrow\infty}E_{\hat{x}^{n}\in\mathcal{B}}\left[d(x^{n},\hat{x}^{n})\right]\leq D\]
where the distortion~$D$~is minimized for a given rate~$R$. Thus,
for large enough~$n$, we have\[
\min_{\hat{x}^{n}\in\mathcal{B}}d(x^{n},\hat{x}^{n})\leq D\]
with high probability. Here,~$R$~and~$D$~are closely related
to the complexity and the performance, respectively, of the decoding
algorithm. Therefore, we characterize the trade-off between those
two aspects using the relationship between~$R$~and~$D$.

\subsection{Generalized error and erasure patterns}

In this subsection, we consider a generalization of the conventional
error and erasure patterns under the same framework to make better
use of the soft information. At each index of the RS codeword, beside
erasing the symbol we can try to decode using not only the most likely
symbol but also other ones as the hard decision (HD) symbol. To handle
up to the~$l$~most likely symbols at each index~$i$, we let~$\mathbb{Z}_{l+1}\triangleq\{0,1,\ldots,l\}$~and
consider the following definition.
\begin{definitn}
\label{def:(PatternsBMA)}(Generalized error patterns and erasure
patterns) Consider a positive integer~$l<2^{q}$. Let us define $x^{n}\in\mathbb{Z}_{l+1}^{n}$~as
the generalized error pattern where, at index~$i$,~$x_{i}=j$~implies
that the~$j$-th most likely symbol is correct for~$j\in\{1,2,\ldots l\}$,
and~$x_{i}=0$~implies none of the first~$l$~most likely symbols
is correct. Let $\hat{x}^{n}\in\mathbb{Z}_{l+1}^{n}$~be the generalized
erasure pattern used for decoding where, at index~$i$,~$\hat{x}_{i}=j$
implies that the~$j$-th most likely symbol is used as the hard-decision
symbol for~$j\in\{1,2,\ldots,l\}$, and~$\hat{x}_{i}=0$~implies
that an erasure is used at that index. 

For simplicity, we will refer to~$x^{n}$~as the error pattern and~$\hat{x}^{n}$~as
the erasure pattern like in the conventional case. Next, we also want
to convert the condition (\ref{eq:scbma_0}) to the form where~$d(x^{n},\hat{x}^{n})$~is
less than a fixed threshold. Proposition \ref{prop:bma1}\emph{ }is
thereby generalized into the following theorem.\end{definitn}
\begin{thm}
\label{thm:ExtBMA-1}We choose the \emph{letter-by-letter} distortion
measure~$\delta:\mathbb{Z}_{l+1}\times\mathbb{Z}_{l+1}\rightarrow\mathbb{R}_{\geq0}$~
defined by~$\delta(x,\hat{x})=[\Delta]_{x,\hat{x}}$~in terms of
the $(l+1)\times(l+1$)~matrix\[
\Delta=\left(\begin{array}{ccccc}
1 & 2 & \ldots & 2 & 2\\
1 & 0 & \ldots & 2 & 2\\
\vdots & \vdots & \ddots & \vdots & \vdots\\
1 & 2 & \ldots & 0 & 2\\
1 & 2 & \ldots & 2 & 0\end{array}\right).\]
Using this, the condition (\ref{eq:scbma_0}) for a successful error-and-erasure
decoding becomes\[
d(x^{n},\hat{x}^{n})<n-k+1.\]
\end{thm}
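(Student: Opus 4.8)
The plan is to mimic the counting argument from the proof of Proposition \ref{prop:bma1}, lifting the bookkeeping from the binary alphabet $\mathbb{Z}_2$ to $\mathbb{Z}_{l+1}$. The guiding observation is that both the number of erasures $e$ and the number of errors $\nu$ in unerased positions can be read off from the joint statistics of the pairs $(x_i,\hat{x}_i)$, and that the matrix $\Delta$ is engineered precisely so that the per-letter distortion equals the per-letter contribution of each position to the quantity $2\nu+e$ appearing in Condition \ref{con:BMAerr-n-era}.

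First I would reintroduce the counting variables $\chi_{s,t}\triangleq\left|\{i:x_i=s,\hat{x}_i=t\}\right|$ for $s,t\in\mathbb{Z}_{l+1}$. The next step—and the conceptual crux—is to translate the semantics of Definition \ref{def:(PatternsBMA)} into these counts. A position $i$ is erased exactly when $\hat{x}_i=0$, so $e=\sum_{s=0}^{l}\chi_{s,0}$. A position $i$ is an unerased error exactly when the hard-decision symbol actually used (the $\hat{x}_i$-th most likely symbol, with $\hat{x}_i\geq1$) is not the transmitted one; by the meaning of $x_i$, this occurs precisely when $\hat{x}_i\geq1$ and $x_i\neq\hat{x}_i$. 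Hence $\nu=\sum_{t=1}^{l}\sum_{s\neq t}\chi_{s,t}$. Getting this correspondence right is where the real care lies: the case $x_i=0$ (none of the top $l$ symbols correct) must count as an error whenever any nonzero hard-decision symbol is used, and every disagreement $x_i\neq\hat{x}_i$ with $\hat{x}_i\geq1$ must likewise be counted.

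With $e$ and $\nu$ expressed in terms of $\chi_{s,t}$, I would then verify that $\Delta$ reproduces $2\nu+e$ letter by letter. Reading off the three cases of the matrix: the first column gives $\delta(s,0)=1$ at each erased position (summing to $e$); the diagonal entries $\delta(x,x)=0$ for $x\geq1$ contribute nothing at correctly decoded unerased positions; and every off-diagonal entry with $\hat{x}\geq1$ equals $2$, contributing $2$ at each unerased error. Summing over all positions yields
\[
d(x^n,\hat{x}^n)=\sum_{s,t}\chi_{s,t}\,[\Delta]_{s,t}=\sum_{s}\chi_{s,0}+2\sum_{t=1}^{l}\sum_{s\neq t}\chi_{s,t}=e+2\nu.
\]
Substituting into Condition \ref{con:BMAerr-n-era}, the inequality $2\nu+e<n-k+1$ becomes exactly $d(x^n,\hat{x}^n)<n-k+1$, which is the claimed threshold form. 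I anticipate no analytic difficulty; the only genuine subtlety is the pattern-to-count translation in the second step, after which the argument reduces to the same routine accounting that established Proposition \ref{prop:bma1}.
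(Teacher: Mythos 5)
Your proposal is correct and follows exactly the paper's route: the paper's own proof simply cites the same identities $e=\sum_{s=0}^{l}\chi_{s,0}$ and $\nu=\sum_{t=1}^{l}\sum_{s\neq t}\chi_{s,t}$ and appeals to the argument of Proposition \ref{prop:bma1}. You have merely written out the bookkeeping the paper leaves implicit, and it checks out.
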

\begin{proof}
The reasoning is similar to Proposition \ref{prop:bma1} using the
fact that~$e=\sum_{s=0}^{l}\chi_{s,0}$~and~$\nu=\sum_{t=1,}^{l}\sum_{s=0,s\neq t}^{l}\chi_{s,t}$~where
$\chi_{s,t}\triangleq\left|\{i\in\left\{ 1,2,\ldots,n\right\} :x_{i}=s,\hat{x}_{i}=t\right|$
for every~$s,t\in\mathbb{Z}_{l+1}$.
\end{proof}
For each~$l=1,2,\ldots,2^{q}$, we will refer to this generalized
case as~mBM-$l$~decoding.
\begin{example}
We consider the case mBM-2 decoding where~$l=2$. The distortion
measure is given by following the matrix\[
\Delta=\left(\begin{array}{ccc}
1 & 2 & 2\\
1 & 0 & 2\\
1 & 2 & 2\end{array}\right).\]
Here, at each codeword position, we consider the first and second
most likely symbols as the two hard-decision choices like in the Chase-type
decoding method proposed by Bellorado and Kavcic \cite{Bellorado-isit06}.
\end{example}

\subsection{Proposed General Multiple-Decoding Algorithm \label{sec:Proposed-Algorithm}}

In this section, we propose a general multiple-decoding algorithm
for RS codes based on the rate-distortion approach. This general algorithm
applies to not only multiple error-and-erasure decoding but also multiple-decoding
of other decoding schemes that we will discuss later. The first step
is designing a distortion measure that converts the condition for
a single decoding to succeed to the form where distortion is less
than a fixed threshold. After that, decoding proceeds as described
below.
\begin{itemize}
\item \emph{Phase I: Compute rate-distortion function.}
\end{itemize}
\emph{~~~\,Step 1:} Transmit~$\tau$~(say~$\tau=1000$) arbitrary
test RS codewords, indexed by time~$t=1,2,\ldots,\tau$, over the
channel and compute a set of~$\tau$~$2^{q}\times n$~matrices~$\mathbf{P}_{1}^{(t)}$~where~$[\mathbf{P}_{1}^{(t)}]_{j,i}=p_{i,\pi_{i}^{(t)}(j)}^{(t)}$~is
the probability of the~$j$-th most likely symbol at position~$i$~during
time~$t$.

\emph{Step 2:} For each time~$t$, obtain the matrix~$\mathbf{P}_{2}^{(t)}$~from~$\mathbf{P}_{1}^{(t)}$~through
a permutation~$\sigma^{(t)}:\{1,2,\ldots,n\}\rightarrow\{1,2,\dots,n\}$~that
sorts the probabilities~$p_{i,\pi_{i}^{(t)}(1)}^{(t)}$~in increasing
order to indicate the reliability order of codeword positions. Take
the entry-wise average of all~$\tau$~matrices~$\mathbf{P}_{2}^{(t)}$~to
get an average matrix~$\mathbf{\bar{P}}$.

\emph{Step 3:} Compute the R-D function of a source sequence (error
pattern) with probability of source letters derived from~$\mathbf{\bar{P}}$~and
the designed distortion measure (see Section \ref{sec:Computing-RD}
and Section \ref{sub:AnalyticalRD}) . Determine the point on the
R-D curve that corresponds to a designated rate~$R$~along with
the test-channel input-probability distribution vector~$\underline{q}$~that
achieves that point.
\begin{itemize}
\item \emph{Phase II: Run actual decoder.}
\end{itemize}
\emph{~~\,Step 4: }Based on the actual received signal sequence,
compute~$p_{i,\pi_{i}(1)}$~and determine the permutation~$\sigma$~that
gives the reliability order of codeword positions by sorting~$p_{i,\pi_{i}(1)}$~in
increasing order.

\emph{Step 5:} Randomly generate a set of~$2^{R}$~erasure patterns
using the test-channel input-probability distribution vector~$\underline{q}$
and permute the indices of each erasure pattern by the permutation~$\sigma^{-1}.$

\emph{Step 6:} Run multiple attempts of the corresponding decoding
scheme (e.g. error-and erasure decoding) using the set of erasure
patterns in Step 5 to produce a list of candidate codewords.

\emph{Step 7:} Use Maximum-Likelihood (ML) decoding to pick the best
codeword on the list.

\section{Computing The Rate-Distortion Function\label{sec:Computing-RD}}

In this section, we will present a numerical method to compute the
R-D function and the test-channel input-probability distribution that
achieves a specific point in the R-D curve. This probability distribution
will be needed to randomly generate the set of erasure patterns in
the general multiple-decoding algorithm that we have proposed.

For an arbitrary discrete distortion measure, it can be difficult
to compute the R-D function analytically. Fortunately, the Blahut-Arimoto
(B-A) algorithm (see details in \cite{Blahut-it72,Arimoto-it72})
gives an alternating minimization technique that efficiently computes
the R-D function of a single discrete source. More precisely, given
a parameter~$s<0$~which represents the slope of the~$R-D$~curve
at a specific point and an arbitrary all-positive initial test-channel
input-probability distribution vector~$\underline{q}^{(0)}$, the
B-A algorithm shows us how to compute the rate-distortion point~$(R_{s},D_{s})$~by
means of computing the test-channel input-probability distribution
vector~$\underline{q}^{\star}=\lim_{t\rightarrow\infty}\underline{q}^{(t)}$
and the test-channel transition probability matrix~$Q^{\star}=\lim_{t\rightarrow\infty}Q^{(t)}$~that
achieves that point.

However, it is not straightforward to apply the B-A algorithm to compute
the R-D for a discrete source sequence~$x^{n}$~(an error pattern
in our context) of~$n$~independent but non identical source components~$x_{i}$.
In order to do that, we consider the group of source letters~$(j_{1},j_{2},\ldots,j_{n})$~where~$j_{i}\in\mathcal{J}$~as
a super-source letter~$J\in\mathcal{J}^{n}$, the group of reproduction
letters~$(k_{1},k_{2},\ldots,k_{n})$~where~$k_{i}\in\mathcal{K}$~as
a super-reproduction letter~$K\in\mathcal{K}^{n}$, and the source
sequence~$x^{n}$~as a single source. For each super-source letter~$J$,~$p_{J}=\Pr(x^{n}=J)=\prod_{i=1}^{n}\Pr(x_{i}=j_{i})=\prod_{i=1}^{n}p_{j_{i}}$~follows
from the independence of source components. While we could apply the
B-A algorithm to this source directly, the complexity is a problem
because the alphabet sizes for~$J$~and~$K$~ become the super-alphabet
sizes~$|\mathcal{J}|^{n}$~and~$|\mathcal{K}|^{n}$~respectively.
Instead, we avoid this computational challenge by choosing the initial
test-channel input-probability distribution so that it can be factorized
into a product of~$n$~initial test-channel input-probability components,
i.e.~$q_{K}^{(0)}=\prod_{i=1}q_{k_{i}}^{(0)}$. Then, we see that
this factorization rule still applies after every step of the iterative
process. By doing this, for each parameter~$s$~we only need to
compute the rate-distortion pair for each component (or index~$i$)
separately and sum them together. This is captured into the following
theorem.
\begin{thm}
\label{thm:(Factored-Blahut)}(Factored Blahut-Arimoto algorithm)
Consider a discrete source sequence~$x^{n}$~of~$n$~independent
but non identical source components~$x_{i}$. Given a parameter~$s<0$,
the rate and the distortion for this source sequence are given by\[
R_{s}=\sum_{i=1}^{n}R_{i,s}\,\,\mbox{and}\,\, D_{s}=\sum_{i=1}^{n}D_{i,s}\]
where the components~$R_{i,s}$~~and~$D_{i,s}$~are computed
by the B-A algorithm with the parameter~$s$. This pair of rate and
distortion can be achieved by the corresponding test-channel input-probability
distribution~$q_{K}\triangleq\Pr(\hat{x}^{n}=K)=\prod_{i=1}^{n}q_{k_{i}}$~where
the component probability distribution~$q_{k_{i}}\triangleq\Pr(\hat{x}_{i}=k_{i})$.\end{thm}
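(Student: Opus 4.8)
The plan is to prove the factorization property claimed in Theorem~\ref{thm:(Factored-Blahut)} by unrolling one iteration of the Blahut--Arimoto recursion and showing that the product form of the test-channel input distribution is preserved. First I would recall the B-A update equations for a single source with distortion matrix entries $\delta(j,k)$ and slope parameter $s<0$. Writing $A_{j,k}=e^{s\,\delta(j,k)}$, one step of the algorithm maps the current input distribution $q_k$ to the transition probabilities $Q_{k\mid j}=q_k A_{j,k}/\sum_{k'}q_{k'}A_{j,k'}$ and then updates $q_k^{\mathrm{new}}=\sum_j p_j Q_{k\mid j}$. The key observation is that the distortion measure on the super-alphabet is \emph{additive}, $d(J,K)=\sum_{i=1}^n \delta(j_i,k_i)$, so the super-letter kernel factorizes as $A_{J,K}=e^{s\,d(J,K)}=\prod_{i=1}^n e^{s\,\delta(j_i,k_i)}=\prod_{i=1}^n A^{(i)}_{j_i,k_i}$, and by hypothesis the source distribution factorizes as $p_J=\prod_i p_{j_i}$.

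The heart of the argument is an induction on the iteration index $t$. The inductive hypothesis is that $q_K^{(t)}=\prod_{i=1}^n q_{k_i}^{(t)}$, which holds at $t=0$ by the chosen initialization. Assuming it at step $t$, I would substitute the product forms of $A_{J,K}$ and $q_K^{(t)}$ into the transition-probability update. Because both the numerator and the normalizing denominator split into products indexed by $i$ --- the denominator being $\sum_{K'}\prod_i q_{k_i'}^{(t)}A^{(i)}_{j_i,k_i'}=\prod_i\bigl(\sum_{k_i'}q_{k_i'}^{(t)}A^{(i)}_{j_i,k_i'}\bigr)$ --- the conditional kernel separates into $Q^{(t)}_{K\mid J}=\prod_{i=1}^n Q^{(i,t)}_{k_i\mid j_i}$. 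Feeding this into the input-distribution update $q_K^{(t+1)}=\sum_J p_J Q^{(t)}_{K\mid J}$ and again exploiting that the sum over $J$ decouples across coordinates once everything is a product, I obtain $q_K^{(t+1)}=\prod_{i=1}^n q_{k_i}^{(t+1)}$, closing the induction. Taking $t\to\infty$ gives the factorized fixed point $q_K=\prod_i q_{k_i}$ and $Q_{K\mid J}=\prod_i Q^{(i)}_{k_i\mid j_i}$.

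It remains to show the rate and distortion additively decompose. Here I would use the standard B-A formulas that express $R_s$ and $D_s$ at the fixed point in terms of $p_J$, $q_K$, and $A_{J,K}$: the distortion is $D_s=\sum_{J,K}p_J Q_{K\mid J}\,d(J,K)$ and the rate is the corresponding mutual-information expression. Substituting the product forms and the additive distortion, the distortion sum telescopes into $\sum_{i=1}^n\sum_{j_i,k_i}p_{j_i}Q^{(i)}_{k_i\mid j_i}\delta(j_i,k_i)=\sum_{i=1}^n D_{i,s}$, since for each fixed coordinate the marginal transition kernels integrate to one over the remaining coordinates. The rate decomposes identically because the mutual information of a product channel driven by a product source is the sum of the per-coordinate mutual informations, and the $\log$ in the information density turns the product kernels into a sum.

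The main obstacle I anticipate is purely bookkeeping rather than conceptual: carefully verifying that the normalization constants factor correctly at each update so that no cross-coordinate coupling sneaks in, and confirming that the per-coordinate kernels $Q^{(i)}_{k_i\mid j_i}$ produced by the factored recursion are \emph{exactly} the kernels the B-A algorithm would produce if run on coordinate $i$ in isolation with source distribution $p_{j_i}$ and distortion $\delta$. Once that identification is made, the per-component quantities $R_{i,s}$ and $D_{i,s}$ are by definition the outputs of the single-source B-A algorithm on index $i$, and the theorem follows. A minor point worth checking is that the same slope $s$ is used for every coordinate, which is what allows the shared kernel $A^{(i)}_{j_i,k_i}=e^{s\delta(j_i,k_i)}$ and hence the global sum $R_s=\sum_i R_{i,s}$ to correspond to a single point on the aggregate R-D curve.
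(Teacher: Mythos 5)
Your proposal is correct and follows essentially the same route as the paper's own proof in the appendix: initialize with a product-form input distribution, show by induction that the additive distortion (hence multiplicative kernel $e^{s\,d(J,K)}$) and the independent source distribution preserve the factorization of $Q_{K\mid J}^{(t)}$ and $q_{K}^{(t)}$ at every iteration, pass to the limit, and then substitute the product forms into the standard rate and distortion expressions so that the logarithm and marginalization yield the coordinate-wise sums. No meaningful differences to report.
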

\begin{proof}
See Appendix \ref{sec:AppBlahut}.
\end{proof}

\section{Multiple Algebraic Soft Decision Decoding (ASD)\label{sec:Multiple ASD}}

In this section, we analyze and design a distortion measure to convert
the condition for successful ASD decoding to a suitable form so that
we can apply the general multiple-decoding algorithm to ASD decoding.

First, let us give a brief review on ASD decoding of RS codes. Given
a set~$\{\beta_{1},\beta_{2},\ldots,\beta_{n}\}$~of~$n$~distinct
elements in~$\mathbb{F}_{2^{q}}.$ From each message polynomial~$f(X)=f_{0}+f_{1}X+\ldots+f_{k-1}X^{k-1}$,
we can have a codeword~$c=(c_{1},c_{2},\ldots,c_{n})$~by evaluating
the message polynomial at~$\{\beta_{i}\}_{i=1}^{n}$, i.e.~$c_{i}=f(\beta_{i})$~for~$i=1,2,\ldots,n$.
Consider a received vector~$\mathbf{r}=(r_{1},r_{2},\ldots,r_{n})$,
we can compute the \emph{a posteriori} probability (APP) matrix~$\mathbf{P}$~as
follows.\[
[\mathbf{P}]_{j,i}=p_{i,j}=\Pr(c_{i}=\alpha_{j}|\mathbf{r})\,\,\,\mbox{for}\,\,1\leq i\leq n,1\leq j\leq2^{q}.\]
The ASD decoding as in \cite{Koetter-it03} has the following main
steps.
\begin{enumerate}
\item \emph{Multiplicity Assignment}: Use a particular multiplicity assignment
scheme (MAS) to derive a~$2^{q}\times n$ multiplicity matrix, denoted
as~$\mathbf{M}$, of non-negative integer entries~$\{m_{i,j}\}$~from
the APP matrix~$\mathbf{P}$.
\item \emph{Interpolation}: Construct a bivariate polynomial~$Q(X,Y)$~of
minimum~$(1,k-1)$~weighted degree that passes through each of the
point~$(\beta_{j},\alpha_{i})$~with multiplicity~$m_{i,j}$~for~$i=1,2,\ldots,2^{q}$~and~$j=1,2,\ldots,n$.
\item \emph{Factorization}: Find all polynomials~$f(X)$~of degree less
than~$k$~such that~$Y-f(X)$~is a factor of~$Q(X,Y)$~and re-evaluate
these polynomials to form a list of candidate codewords.
\end{enumerate}
In this paper, we denote~$m=\max_{i,j}m_{i,j}$~as the maximum multiplicity.
Intuitively, higher multiplicity should be put on more likely symbols.
Increasing~$m$~generally gives rise to the performance of ASD decoding.
However, one of the drawbacks of ASD decoding is that its decoding
complexity is roughly~$\mathcal{O}(m^{6})$~which sharply increases
with~$m$. Thus, in this section we will work with small~$m$~to
keep the complexity affordable. 

One of the main contributions of \cite{Koetter-it03} is to offer
a condition for successful ASD decoding represented in terms of two
quantities specified as the score and the cost as follows.
\begin{definitn}
The score~$S_{\mathbf{M}}(\mathbf{c})$~with respect to a codeword~$\mathbf{c}$~and
a multiplicity matrix~$\mathbf{M}$~is defined as~$S_{\mathbf{M}}(\mathbf{c})=\sum_{j=1}^{n}m_{[c_{j}],j}$\\
where~$[c_{j}]=i$~such that~$\alpha_{i}=c_{j}$. The cost~$C_{\mathfrak{\mathbf{M}}}$~of
a multiplicity matrix~$\mathbf{M}$~is defined as~$C_{\mathbf{M}}=\frac{1}{2}\sum_{i=1}^{q}\sum_{j=1}^{n}m_{i,j}(m_{i,j}+1)$\end{definitn}
\begin{condition}
(ASD decoding threshold, see \cite{Koetter-it03,Jiang-it08,McEliece-IPN03}).
The transmitted codeword will be on the list if\begin{eqnarray*}
T(S_{\mathbf{M}}) & > & C_{\mathbf{M}}\,\,\,\mbox{where\ensuremath{\,\,}}T(S_{\mathbf{M}})=(a+1)\left[S_{\mathbf{M}}-\frac{a}{2}(k-1)\right]\end{eqnarray*}
\begin{equation}
\mbox{for any}\,\, a\in\mathbb{N}\,\,\mbox{such that}\,\, a(k-1)<S_{\mathbf{M}}\leq(a+1)(k-1).\label{eq:ASDsc}\end{equation}

\end{condition}
To match the general framework, the ASD decoding threshold (or condition
for successful ASD decoding) should be converted to the form where
the distortion is smaller than a fixed threshold.

\subsection{Bit-level ASD case}

In this subsection, we consider multiple trials of ASD decoding using
bit-level erasure patterns. A bit-level error pattern~$b^{N}\in\mathbb{Z}_{2}^{N}$~and
a bit-level erasure pattern~$\hat{b}^{N}\in\mathbb{Z}_{2}^{N}$~has
length~$N=n\times q$~since each symbol has~$q$~bits. Similar
to Definition \ref{def:(Conv. patterns)} of a conventional error
pattern and a conventional erasure pattern,~$b_{i}=0$~in a bit-level
error pattern implies a bit-level error occurs and~$\hat{b}_{i}$~in
a bit-level erasure pattern implies that a bit-level erasure occurs. 

From each bit-level erasure pattern we can specify entries of the
multiplicity matrix~$\mathbf{M}$~using the bit-level MAS proposed
in \cite{Jiang-it08} as follows: for each codeword position, assign
multiplicity 2 to the symbol with no bit erased, assign multiplicity
1 to each of the two candidate symbols if there is 1 bit erased, and
assign multiplicity zero to all the symbols if there are~$\geq2$~bits
erased. All the other entries are zeros by default. This MAS has a
larger decoding region compared to the conventional error-and-erasure
decoding scheme.
\begin{condition}
(Bit-level ASD decoding threshold, see \cite{Jiang-it08}) For RS
codes of rate~$\frac{k}{n}\geq\frac{2}{3}+\frac{1}{n}$, ASD decoding
using the proposed bit-level MAS will succeed (i.e. the transmitted
codeword is on the list) if\begin{equation}
3\nu_{b}+e_{b}<\frac{1}{2}(n-k+1)\label{eq:bgmdsc}\end{equation}
where~$e_{b}$~is the number of bit-level erasures and~$\nu_{b}$~is
the number of bit-level errors in unerased locations.
\end{condition}
We can choose an appropriate distortion measure according to the following
proposition which is a natural extension of Proposition \ref{prop:bma1}
in the symbol level.
\begin{prop}
\label{prop:BitASD}If we choose the bit-level \emph{letter-by-letter}
distortion measure~$\delta:\mathbb{Z}_{2}\times\mathbb{Z}_{2}\rightarrow\mathbb{R}_{\geq0}$~as
follows\begin{equation}
\begin{array}{cc}
\delta(0,0)=1 & \delta(0,1)=3\\
\delta(1,0)=1 & \delta(1,1)=0\end{array}\label{eq:dstfnBGMD}\end{equation}
then the condition (\ref{eq:bgmdsc}) becomes \begin{equation}
d(b^{N},\hat{b}^{N})<\frac{1}{2}(n-k+1).\label{eq:scbgmd2}\end{equation}
\end{prop}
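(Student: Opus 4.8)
The plan is to mirror the proof of Proposition~\ref{prop:bma1} almost verbatim, working at the bit level over the $N=nq$ bit positions and accounting for the coefficient~$3$ that sits in front of~$\nu_b$ in the bit-level threshold~(\ref{eq:bgmdsc}). First I would introduce the pair-counting variables $\chi_{s,t}\triangleq\left|\{i\in\{1,2,\ldots,N\}:b_i=s,\hat{b}_i=t\}\right|$ for every $s,t\in\{0,1\}$, exactly as in the symbol-level argument but now ranging over bit indices rather than symbol indices.

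The crux is then to read off the two bit-level quantities from these counts. Since a bit-level erasure is precisely a position with $\hat{b}_i=0$, we obtain $e_b=\chi_{0,0}+\chi_{1,0}$; and since a bit-level error in an unerased location is precisely a position with $b_i=0$ (error) and $\hat{b}_i=1$ (not erased), we obtain $\nu_b=\chi_{0,1}$. Substituting these identities into~(\ref{eq:bgmdsc}) recasts the threshold as $3\chi_{0,1}+\chi_{0,0}+\chi_{1,0}<\frac{1}{2}(n-k+1)$. On the other side, expanding the distortion with the chosen measure~(\ref{eq:dstfnBGMD}) gives $d(b^N,\hat{b}^N)=\sum_{i=1}^{N}\delta(b_i,\hat{b}_i)=\chi_{0,0}+3\chi_{0,1}+\chi_{1,0}$, which coincides term-for-term with the left side of the recast threshold; comparing the two delivers~(\ref{eq:scbgmd2}).

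I expect no real analytic obstacle, since the argument is pure combinatorial bookkeeping; the single point demanding care is the semantic matching between the bit-level decoding quantities and the counting variables. One must check that the convention fixed earlier (a bit-level error is $b_i=0$, a bit-level erasure is $\hat{b}_i=0$) forces the weight~$3$ onto exactly the $(0,1)$ entry, so that the unit cost of erasing, $\delta(\cdot,0)=1$, together with the cost $\delta(0,1)=3$ of an undetected error on a used bit, reproduces the asymmetric weighting $3\nu_b+e_b$ of~(\ref{eq:bgmdsc}). Verifying that this asymmetry is the only change from the symbol-level matrix~(\ref{eq:dstfnBMA}), where the corresponding entry was~$2$, is what makes the proposition an honest bit-level analog of Proposition~\ref{prop:bma1}.
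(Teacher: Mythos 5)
Your proposal is correct and follows exactly the route the paper intends: the paper's own proof simply says it ``uses the same reasoning as the proof of Proposition~\ref{prop:bma1},'' and your bit-level bookkeeping with $\chi_{s,t}$, the identifications $e_b=\chi_{0,0}+\chi_{1,0}$ and $\nu_b=\chi_{0,1}$, and the expansion $d(b^N,\hat{b}^N)=3\chi_{0,1}+\chi_{0,0}+\chi_{1,0}$ is precisely that reasoning carried out explicitly. No gaps.
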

\begin{proof}
The proof uses the same reasoning as the proof of\emph{ }Proposition\emph{
}\ref{prop:bma1}.\end{proof}
\begin{remrk}
We refer the the multiple-decoding of bit-level ASD as m-b-ASD.
\end{remrk}

\subsection{Symbol-level ASD case}

In this subsection, we try to convert the condition for successful
ASD decoding in general to the form that suits our goal. We will also
determine which multiplicity assignment schemes allow us to do so. 
\begin{definitn}
(Multiplicity type) For some codeword position, let us assign multiplicity~$m_{j}$~to
the~$j$-th most likely symbol for~$j=1,2,\ldots,l$~where~$l\leq2^{q}$.
The remaining entries in the column are zeros by default. We call
the sequence,~$(m_{1},m_{2},\ldots,m_{l})$, the column multiplicity
type for {}``top-$l$'' decoding.
\end{definitn}
First, we notice that a choice of multiplicity types in ASD decoding
at each codeword position has the similar meaning to a choice of erasure
decisions in the conventional error-and-erasure decoding. However,
in ASD decoding we are more flexible and may have more types of erasures.
For example, assigning multiplicity zero to all the symbols (all-zero
multiplicity type) at codeword position~$i$~corresponds to the
case when we have a complete erasure at that position. Assigning the
maximum multiplicity~$m$~to one symbol corresponds to the case
when we choose that symbol as the hard-decision one. Hence with some
abuse of terminology, we also use the term (generalized) erasure pattern~$\hat{x}^{n}$~for
the multiplicity assignment scheme in the ASD context. Each erasure-letter~$x_{i}$~gives
the multiplicity type for the corresponding column of the multiplicity
matrix~$\mathbf{M}$. 
\begin{definitn}
(Error and erasure patterns for ASD decoding) Consider a MAS with~$z$~multiplicity
types. Let~$\hat{x}^{n}\in\{1,2\ldots,z\}^{n}$ be an erasure pattern
where, at index~$i$,~$x_{i}=j$~implies that~multiplicity type~$j$~is
used at column~$i$~of the multiplicity matrix~$\mathbf{M}$. Notice
that the definition of an error pattern~$x^{n}\in\mathbb{Z}_{l+1}^{n}$~in
Definition \ref{def:(PatternsBMA)} applies unchanged here. 
\end{definitn}
Rate-distortion theory gives us the intuition that in general the
more multiplicity types (erasure choices) we have, the better performance
of multiple ASD decoding we achieve as~$n$~becomes large. Thus,
we want to find as many as possible multiplicity types for {}``top-$l$''
that allow us to convert condition for successful ASD decoding to
the correct form.
\begin{example}
Choosing $m=2$, for example, gives four column multiplicity types
for {}``top-2'' decoding as follows: the first is~$(2,0)$~where
we assign multiplicity 2 to the most likely symbol~$y_{i,1}$, the
second is~$(1,1)$~where we assign equal multiplicity 1 to the first
and second most likely symbols~$y_{i,1}$~and~$y_{i,2}$, the third
is~$(0,2)$~where we assign multiplicity 2 to the second most likely
symbol~$y_{i,2}$, and the fourth is~$(0,0)$~where we assign multiplicity
zero to all the symbols at index~$i$~(i.e. the~$i$-th column
of~$\mathbf{M}$~is an all-zero column). As a corollary of Theorem
\ref{thm:GenASD} below, the distortion matrix that converts (\ref{eq:ASDsc})
to the correct form for this case is\[
\Delta=\left(\begin{array}{cccc}
2 & \nicefrac{5}{3} & 2 & 1\\
0 & \nicefrac{2}{3} & 2 & 1\\
2 & \nicefrac{5}{3} & 0 & 1\end{array}\right).\]

\end{example}
The following definition and theorem provide a set of allowable multiplicity
types that converts the condition for successful ASD decoding into
the form where distortion is less than a fixed threshold. 
\begin{definitn}
The set of allowable multiplicity types for {}``top-$l$''~decoding
with maximum multiplicity~$m$ is defined to be%
\footnote{We use the convention that~$\min_{r:m_{r}\neq0}m_{r}=0$~if~$\left\{ r:m_{r}\neq0\right\} =\emptyset$.%
}\begin{equation}
\mathcal{A}(m,l)\triangleq\left\{ (m_{1},m_{2},\ldots,m_{l}):\sum_{r=1}^{l}m_{r}\leq m\,\,\mbox{and}\,\,\sum_{r=1}^{l}m_{r}(m-m_{r})\leq(m+1)\left(\left|\left\{ r:m_{r}\neq0\right\} \right|-1\right)\min_{r:m_{r}\neq0}m_{r}\right\} .\label{eq:AllowableMAS}\end{equation}
Taking the elements of this set in an arbitrary order, we let the
$j$-th multiplicity type in the allowable set be~$(m_{j,1},m_{j,2},\ldots m_{j,l})$. \end{definitn}
\begin{example}
$\mathcal{A}(3,2)$~consists of all permutations of~$(3,0),(2,1),(1,1),(0,0)$.
Meanwhile, $\mathcal{A}(2,2)$~comprises all the permutations of~$(2,0),(1,1),(0,0)$
and we refer to the multiple ASD decoding algorithm using this set
of multiplicity types as mASD-2.~$\mathcal{A}(3,3)$~consists of
all the permutations of~$(3,0,0),(0,0,0),(1,1,0),(2,1,0),(1,1,1)$
and this case is referred as mASD-3. We also consider another case
called~mASD-2a that uses the set of multiplicity types~$\{(2,0),(1,1),(0,0)\}$.\end{example}
\begin{thm}
\label{thm:GenASD} Let $z=\left|\mathcal{A}(m,l)\right|$ be the
number of multiplicity types in a MAS for {}``top-$l$''~decoding
with maximum multiplicity~$m$. Let $\delta:\mathbb{Z}_{l+1}\times\mathbb{Z}_{z+1}\setminus\{0\}\rightarrow\mathbb{R}_{\geq0}$
be a \emph{letter-by-letter} distortion measure defined by~$\delta(x,\hat{x})=[\Delta]_{x,\hat{x}}$,
where $\Delta$is the~$(l+1)\times z$~matrix\[
\Delta=\left(\begin{array}{cccc}
\mu_{1} & \mu_{2} & \ldots & \mu_{z}\\
\mu_{1}-\nicefrac{2m_{1,1}}{m} & \mu_{2}-\nicefrac{2m_{2,1}}{m} & \ldots & \mu_{z}-\nicefrac{2m_{z,1}}{m}\\
\mu_{1}-\nicefrac{2m_{1,2}}{m} & \mu_{2}-\nicefrac{2m_{2,2}}{m} & \ldots & \mu_{z}-\nicefrac{2m_{z,2}}{m}\\
\vdots & \vdots & \ddots & \vdots\\
\mu_{1}-\nicefrac{2m_{1,l}}{m} & \mu_{2}-\nicefrac{2m_{2,l}}{m} & \ldots & \mu_{z}-\nicefrac{2m_{z,l}}{m}\end{array}\right).\]
with~$\mu_{t}=1+\sum_{r=1}^{l}\frac{m_{t,r}(m_{t,r}+1)}{m(m+1)}$.
Then, the condition (\ref{eq:ASDsc}) for successful ASD decoding
of a RS code with rate~$\frac{k}{n}\geq\frac{1}{n}+\frac{m(m+3)}{(m+1)(m+2)}$~is
equivalent to \begin{equation}
d(x^{n},\hat{x}^{n})<n-k+1.\label{eq:scasd1}\end{equation}
\end{thm}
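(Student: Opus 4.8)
The plan is to reproduce, now at the level of the \emph{score} $S_{\mathbf M}$ and \emph{cost} $C_{\mathbf M}$, the per-position bookkeeping that drives Proposition~\ref{prop:bma1} and Theorem~\ref{thm:ExtBMA-1}. First I would express both quantities as sums over the $n$ positions. If position $j$ uses multiplicity type $\hat x_j=t$, i.e.\ the column $(m_{t,1},\dots,m_{t,l})$, and the transmitted symbol there is the $x_j$-th most likely, then its score contribution is $m_{t,x_j}$ (with the convention $m_{t,0}=0$ when none of the top-$l$ symbols is correct) and its cost contribution is $\tfrac12\sum_{r=1}^{l}m_{t,r}(m_{t,r}+1)$. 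Summing gives $S_{\mathbf M}=\sum_{j=1}^{n}m_{\hat x_j,x_j}$ and $C_{\mathbf M}=\tfrac12\sum_{j=1}^{n}\sum_{r=1}^{l}m_{\hat x_j,r}(m_{\hat x_j,r}+1)$.

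Reading off the matrix $\Delta$, the per-letter distortion is $\delta(x_j,\hat x_j)=\mu_{\hat x_j}-\tfrac{2}{m}\,m_{\hat x_j,x_j}$, and substituting $\mu_t=1+\tfrac{1}{m(m+1)}\sum_r m_{t,r}(m_{t,r}+1)$ collapses the sum over $j$ into the two global quantities above. This is the computational heart, and I expect it to yield the identity
\[
d(x^n,\hat x^n)=n+\frac{2\,C_{\mathbf M}}{m(m+1)}-\frac{2\,S_{\mathbf M}}{m}.
\]
Feeding this into (\ref{eq:scasd1}) and clearing denominators, the target inequality $d<n-k+1$ becomes $C_{\mathbf M}<(m+1)\bigl[S_{\mathbf M}-\tfrac m2(k-1)\bigr]$, which is exactly the threshold $T(S_{\mathbf M})>C_{\mathbf M}$ of (\ref{eq:ASDsc}) once the integer $a$ in the definition of $T$ is taken to be $m$.

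Pinning down $a=m$ is where the rate hypothesis is used, and I expect it to be the main obstacle. Writing $L_a(S)=(a+1)\bigl[S-\tfrac a2(k-1)\bigr]$, the map $T$ is the upper envelope $\max_a L_a$ and hence convex, so $a=m$ is the correct index precisely on the top segment $m(k-1)<S_{\mathbf M}\le(m+1)(k-1)$. The upper bound is easy: every position contributes at most $m$ to the score, so $S_{\mathbf M}\le nm$, and a one-line rearrangement of $\tfrac kn\ge\tfrac1n+\tfrac{m(m+3)}{(m+1)(m+2)}$ gives $(m+1)(k-1)\ge nm\ge S_{\mathbf M}$, whence $a\le m$. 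The delicate half is the lower bound $S_{\mathbf M}>m(k-1)$, needed so that $a$ is not strictly smaller than $m$: since $L_a\ge L_m$ for $a\le m$, a success with $a<m$ could otherwise satisfy $T(S_{\mathbf M})>C_{\mathbf M}$ while violating $d<n-k+1$, and a crude cost estimate such as $C_{\mathbf M}\ge S_{\mathbf M}^2/(2n)$ turns out not to be sharp enough to rule this out. I would close the gap using the second inequality in the definition (\ref{eq:AllowableMAS}) of $\mathcal A(m,l)$, which forces an allowable type to spend enough cost per unit of score to keep any successfully decoded configuration inside the top segment; the specific constant $\tfrac{m(m+3)}{(m+1)(m+2)}=1-\tfrac{2}{(m+1)(m+2)}$ should be exactly the rate threshold at which this remains true.

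Finally I would record that $\delta$ maps into $\mathbb{R}_{\ge0}$ as claimed. Writing $M_t=\max_r m_{t,r}$, the smallest entry of column $t$ of $\Delta$ is $\mu_t-\tfrac{2M_t}{m}$, and $\mu_t\ge\tfrac{2M_t}{m}$ reduces to $(m-M_t)^2+(m-M_t)\ge0$, which holds from $M_t\le\sum_r m_{t,r}\le m$ alone (independently of allowability). Thus $d$ is a genuine nonnegative distortion, and the rate-distortion machinery of Section~\ref{sec:Computing-RD} applies verbatim.
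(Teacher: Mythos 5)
Your proposal is correct and follows essentially the same route as the paper's proof: the identity $d(x^n,\hat x^n)=n+\tfrac{2C_{\mathbf M}}{m(m+1)}-\tfrac{2S_{\mathbf M}}{m}$ reduces everything to pinning $a=m$, with $a\le m$ coming from the rate hypothesis and $a\ge m$ coming from the second inequality in (\ref{eq:AllowableMAS}), which (as you correctly anticipate but leave unverified, at the same level of detail as the paper's own sketch) yields $2C_{\mathbf M}\ge (m+1)S_{\mathbf M}$ column by column. The only inaccuracy is your closing guess about the constant: $\tfrac{m(m+3)}{(m+1)(m+2)}$ is not the threshold for the lower-bound half (that step uses no rate hypothesis at all), but is what the paper needs for its version of the upper-bound half; your cruder bound $S_{\mathbf M}\le nm\le(m+1)(k-1)$ already suffices under the stated, stronger hypothesis.
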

\begin{proof}
[Sketch of proof] (See details in \cite{Nguyen-prep08}) Let $S$
and $C$ be the score and cost of the multiplicity assignment. First,
we show that~$S>\frac{C}{a+1}+\frac{a}{2}(k-1)$~in (\ref{eq:ASDsc})
implies that~$\left(m-\frac{m(m+1)}{2(a+1)}\right)n-\frac{a}{2}(k-1)\geq0$.
Combining this inequality with the high-rate constraint in Theorem
\ref{thm:GenASD} implies that~$a<m+1$. From (\ref{eq:ASDsc}),
we also know that~$(a+1)S-C>\frac{1}{2}a(a+1)(k-1)\geq\frac{1}{2}aS$~and
this implies that~$2C<(a+2)S$. But, the conditions of the theorem
can also be used to show that $2C\geq(m+1)S$. Combining this with
$2C<(a+2)S$ gives a contradiction unless~$a>m-1$. Thus, we conclude
that~$a=m$. 

Therefore, the condition in (\ref{eq:ASDsc}) is equivalent to~$S>\frac{C}{m+1}+\frac{m}{2}(k-1)$~because~$a(k-1)<S$~is
a consequence of~$a=m$ and~$S\leq(m+1)(k-1)$ is satisfied by the
high-rate constraint. Finally, one can show that~$S>\frac{C}{m+1}+\frac{m}{2}(k-1)$~is
equivalent to~$d(x^{n},\hat{x}^{n})<n-k+1$~with the chosen distortion
matrix.\end{proof}
\begin{remrk}
For a fixed~$m$, the size of $\mathcal{A}(m,l)$~is maximized~when~$l=m$.
Multiplicity types~$\smash{(\underbrace{0,\ldots,0}_{m}),(\underbrace{1,\ldots,1}_{m})}$~and
any permutation of~$(m,0,\ldots,0),$$(\lfloor\frac{m}{2}\rfloor,\lfloor\frac{m}{2}\rfloor,0,\ldots,0)$~are
always in the allowable set~$\mathcal{A}(m,m)$.
\end{remrk}

\section{Some Extensions and Generalizations\label{sec:Ext-and-Gen}}

\subsection{Erasure patterns using covering codes}

The R-D framework we use is most suitable when~$n\rightarrow\infty$.
For a finite~$n$, the random coding approach may have problems with
only a few LRPs. We can instead use good covering codes to handle
these LRPs. In the scope of covering problems, one can use an~$l$-ary
$t_{c}$-covering code (e.g. a perfect Hamming or Golay code) with
covering radius~$t_{c}$~to cover the whole space of~$l$-ary vectors
of the same length. The covering may still work well if the distortion
measure is close to, but not exactly equal to the Hamming distortion. 

In order take care of up to the~$l$~most likely symbols at each
of the~$n_{p}$~LRPs of an~$(n,k)$~RS, we consider an~$(n_{c},k_{c})$~$l$-ary~$t_{c}$-covering
code whose codeword alphabet is~$\mathbb{Z}_{l+1}\setminus\{0\}=\{1,2,\ldots,l\}.$
Then, we give a definition of the (generalized) error patterns and
erasure patterns for this case. In order to draw similarities between
this case and the previous cases, we still use the terminology {}``generalized
erasure pattern'' and shorten it to erasure pattern even if error-only
decoding is used. For error-only decoding, Condition \ref{con:BMAerr-n-era}\emph{
}for successful decoding becomes\[
\nu<\frac{1}{2}(n-k+1).\]

\begin{definitn}
(Error and erasure patterns for error-only decoding) Let us define
$x^{n}\in\mathbb{Z}_{l+1}^{n}=\{0,1,\ldots,l\}^{n}$~as an error
pattern where, at index~$i$,~$x_{i}=j$~implies that the~$j$-th
most likely symbol is correct for~$j\in\{1,2,\ldots l\}$, and~$x_{i}=0$~implies
none of the first~$l$~most likely symbols is correct. Let $\hat{x}^{n}\in\{1,2,\ldots,l\}^{n}$~be
an erasure pattern where, at index~$i$,~$\hat{x}_{i}=j$ implies
that the~$j$-th most likely symbol is chosen as the hard-decision
symbol for~$j\in\{1,2,\ldots,l\}$.\end{definitn}
\begin{prop}
If we choose the \emph{letter-by-letter} distortion measure~$\delta:\mathbb{Z}_{l+1}\times\mathbb{Z}_{l+1}\setminus\{0\}\rightarrow\mathbb{R}_{\geq0}$~
defined by~$\delta(x,\hat{x})=[\Delta]_{x,\hat{x}}$~in terms of
the~$(l+1)\times l$~matrix\begin{equation}
\Delta=\left(\begin{array}{cccc}
1 & 1 & \ldots & 1\\
0 & 1 & \ldots & 1\\
1 & 0 & \ldots & 1\\
\vdots & \vdots & \ddots & \vdots\\
1 & 1 & \ldots & 0\end{array}\right)\label{eq:PFdst}\end{equation}
then the condition for successful error-only decoding then becomes\begin{equation}
d(x^{n},\hat{x}^{n})<\frac{1}{2}(n-k+1).\label{eq:SCPF}\end{equation}
\end{prop}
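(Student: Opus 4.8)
The plan is to follow verbatim the counting argument used in the proofs of Proposition~\ref{prop:bma1} and Theorem~\ref{thm:ExtBMA-1}, the only simplification being that error-only decoding performs no erasures, so the decoding threshold involves the error count $\nu$ alone. First I would read off the structure of the distortion matrix in~(\ref{eq:PFdst}). Since $\hat{x}$ ranges only over $\{1,2,\ldots,l\}$ (never $0$), the matrix is $(l+1)\times l$ rather than square, and its entries are transparent: the $x=0$ row is all ones, while for $j\geq 1$ the $x=j$ row is all ones except for a single $0$ in the column $\hat{x}=j$. Thus the letter-by-letter measure is simply $\delta(x,\hat{x})=0$ when $x=\hat{x}$ and $\delta(x,\hat{x})=1$ otherwise; the unique zero-distortion pairs are those where the error pattern and the erasure (hard-decision) pattern agree on the same nonzero symbol index.

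Next I would connect $\delta$ to the occurrence of a symbol error under error-only decoding. By the definitions of the two patterns, at index $i$ the decoder uses the $\hat{x}_i$-th most likely symbol as its hard decision, and by the definition of the error pattern this symbol is the correct one precisely when $x_i=\hat{x}_i$; otherwise a symbol error occurs at that position. Introducing the counts $\chi_{s,t}\triangleq|\{i\in\{1,2,\ldots,n\}:x_i=s,\ \hat{x}_i=t\}|$ for $s\in\{0,1,\ldots,l\}$ and $t\in\{1,\ldots,l\}$, the total number of errors is $\nu=\sum_{t=1}^{l}\sum_{s=0,\,s\neq t}^{l}\chi_{s,t}$. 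On the other hand, because $\delta(s,t)=1$ for every pair with $s\neq t$ (including all pairs with $s=0$, as $t\geq 1$ always) and $\delta(s,t)=0$ when $s=t$, the total distortion is $d(x^n,\hat{x}^n)=\sum_{s,t}\chi_{s,t}\,\delta(s,t)=\sum_{t=1}^{l}\sum_{s=0,\,s\neq t}^{l}\chi_{s,t}$. Hence $d(x^n,\hat{x}^n)=\nu$ as an exact identity, and substituting it into the error-only threshold $\nu<\frac{1}{2}(n-k+1)$ yields~(\ref{eq:SCPF}) immediately.

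There is essentially no analytical obstacle; the whole content is a bookkeeping identity $d=\nu$. The one place that warrants care is the handling of the index conventions that distinguish this case from the symbol-level cases: I would verify that the all-ones $x=0$ row is correct (a position whose true symbol lies outside the top~$l$ can never be decoded correctly and so must always contribute distortion $1$), and that, with no erasure letter available, $\hat{x}$ is restricted to $\{1,\ldots,l\}$ so that the column index set and the matrix dimensions are as stated. Once these conventions are pinned down, the equality $d(x^n,\hat{x}^n)=\nu$ and therefore the equivalence with~(\ref{eq:SCPF}) follow with the same reasoning as Proposition~\ref{prop:bma1}.
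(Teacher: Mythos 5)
Your proposal is correct and matches the paper's own (one-line) proof exactly: both establish the identity $d(x^{n},\hat{x}^{n})=\sum_{t=1}^{l}\sum_{s=0,\,s\neq t}^{l}\chi_{s,t}=\nu$ and substitute into the error-only threshold $\nu<\frac{1}{2}(n-k+1)$. The additional care you take with the index conventions (the all-ones $x=0$ row and the restriction of $\hat{x}$ to $\{1,\ldots,l\}$) is consistent with the paper and adds only expository detail.
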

\begin{proof}
It follows directly from~$d(x^{n},\hat{x}^{n})=\sum_{t=1}^{l}\sum_{s=0,s\neq t}^{l}\chi_{s,t}=v$.\end{proof}
\begin{remrk}
If we delete the first row which corresponds to the case where none
of the first~$l$~most likely symbols is correct then the distortion
measure is exactly the Hamming distortion. 
\end{remrk}

\paragraph*{Split covering approach:}

We can break an error pattern~$x^{n}$~into two sub-error patterns~$x^{LRPs}\triangleq x_{\sigma(1)}x_{\sigma(2)}\ldots x_{\sigma(n_{c})}$~of~$n_{c}$
least reliable positions and~$x^{MRPs}\triangleq x_{\sigma(n_{c}+1)}\ldots x_{\sigma(n)}$~of~$n-n_{c}$~most
reliable positions. Similarly, we can break an erasure pattern~$\hat{x}^{n}$~into
two sub-erasure patterns~$\hat{x}^{LRPs}\triangleq\hat{x}_{\sigma(1)}\hat{x}_{\sigma(2)}\ldots\hat{x}_{\sigma(n_{c})}$~and~$\hat{x}^{MRPs}\triangleq\hat{x}_{\sigma(n_{c}+1)}\ldots\hat{x}_{\sigma(n)}$.
Let $z_{n_{c}}$ be the number of positions in the~$n_{c}$~LRPs
where none of the first~$l$~most likely symbols~is correct, or
$z_{n_{c}}=\left|\left\{ i=1,2,\ldots,n_{c}:x_{\sigma(i)}=0\right\} \right|$.
If we assign the set of all sub-error patterns~$\hat{x}^{LRPs}$~to
be an~$(n_{c},k_{c})$~$t_{c}$-covering code then~$d(x^{LRPs},\hat{x}^{LRPs})\leq t_{c}+z_{n_{p}}$~because
this covering code has covering radius~$t_{c}$. Since~$d(x^{n},\hat{x}^{n})=d(x^{LRPs},\hat{x}^{LRPs})+d(x^{MRPs},\hat{x}^{MRPs})$,
in order to increase the probability that the condition (\ref{eq:SCPF})
is satisfied we want to make~$d(x^{MRPs},\hat{x}^{MRPs})$~as small
as possible by the use of the R-D approach. The following proposition
summarizes how to generate a set of~$2^{R}$~erasure patterns for
multiple runs of error-only decoding.
\begin{prop}
In each erasure pattern, the letter sequence at~$n_{c}$~LRPs is
set to be a codeword of an~$(n_{c},k_{c})$~$l$-ary~$t_{c}-$covering
code. The letter sequence of the remaining~$n-n_{c}\,\mbox{MRPs}$
is generated randomly by the R-D method (see Section \ref{sec:Proposed-Algorithm})
with rate~$R_{MRPs}=R-k_{c}\log_{2}l$~and the distortion measure
in (\ref{eq:PFdst}). Since this covering code has~$l^{k_{c}}$~codewords,
the total rate is~$R_{MRPs}+\log_{2}l^{k_{c}}=R.$\end{prop}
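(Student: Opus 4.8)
The plan is to read this proposition as two linked claims and prove them in turn: a correctness claim, that the split construction keeps the total distortion below the threshold in (\ref{eq:SCPF}) for at least one pattern, and a counting claim, that the construction uses exactly $2^R$ patterns. Since the rate bookkeeping is the assertion stated explicitly, I would make it the backbone and fill in the covering argument that justifies splitting the positions in the first place.

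First I would record the additive structure of the distortion. Because $d$ is letter-by-letter and the positions are partitioned into the $n_c$ LRPs and the $n-n_c$ MRPs, we have $d(x^n, \hat{x}^n) = d(x^{LRPs}, \hat{x}^{LRPs}) + d(x^{MRPs}, \hat{x}^{MRPs})$, exactly the decomposition noted above the statement. This separation is what lets the two halves of the construction be designed independently.

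Next I would bound the LRP contribution using the covering code. Reading off (\ref{eq:PFdst}), on a position where $x_{\sigma(i)} \geq 1$ the distortion $\delta(x_{\sigma(i)}, \hat{x}_{\sigma(i)})$ equals $0$ when $x_{\sigma(i)} = \hat{x}_{\sigma(i)}$ and $1$ otherwise, i.e. the Hamming indicator, whereas on a position with $x_{\sigma(i)} = 0$ it equals $1$ regardless of $\hat{x}_{\sigma(i)}$. Hence $d(x^{LRPs}, \hat{x}^{LRPs})$ splits into $z_{n_c}$, contributed by the all-zero positions, plus the Hamming distance on the remaining positions. I would then introduce a surrogate target in $\{1,\ldots,l\}^{n_c}$ that agrees with $x^{LRPs}$ off the all-zero positions; since the code is an $l$-ary $t_c$-covering code over $\{1,\ldots,l\}$, some codeword $\hat{x}^{LRPs}$ lies within Hamming radius $t_c$ of this surrogate, and discarding the all-zero positions only decreases the Hamming distance. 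This yields $d(x^{LRPs}, \hat{x}^{LRPs}) \leq t_c + z_{n_c}$, the bound quoted above the statement. For the MRPs, the R-D method of Section \ref{sec:Proposed-Algorithm} with distortion (\ref{eq:PFdst}) minimizes $d(x^{MRPs}, \hat{x}^{MRPs})$ at rate $R_{MRPs}$, so by the additive decomposition the total distortion meets (\ref{eq:SCPF}) with high probability.

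Finally I would verify the count. An $(n_c, k_c)$ $l$-ary code has $l^{k_c}$ codewords, so the LRP half contributes $l^{k_c}$ choices while the R-D half contributes $2^{R_{MRPs}}$ randomly generated MRP patterns; concatenating gives $l^{k_c} \cdot 2^{R_{MRPs}}$ erasure patterns. Taking $\log_2$ gives $\log_2 l^{k_c} + R_{MRPs} = k_c \log_2 l + (R - k_c \log_2 l) = R$, so the set has size $2^R$ as required. I expect the only delicate step to be the covering bound: the covering radius is defined in the Hamming metric over $\{1,\ldots,l\}$, but (\ref{eq:PFdst}) differs from Hamming distance precisely on the all-zero error positions, and the surrogate-target device is what absorbs that discrepancy into the additive $z_{n_c}$ term.
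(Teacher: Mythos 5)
Your proposal is correct and takes essentially the same route as the paper: the proposition's only real assertion is the rate bookkeeping $l^{k_{c}}\cdot 2^{R_{MRPs}}=2^{R}$, and your surrogate-target argument for $d(x^{LRPs},\hat{x}^{LRPs})\leq t_{c}+z_{n_{c}}$ is a faithful (and slightly more careful) filling-in of the covering bound the paper asserts in the paragraph preceding the statement. The one soft spot is your concluding sentence that the total distortion ``meets (\ref{eq:SCPF}) with high probability'': the R-D method only controls the \emph{expected} MRP distortion asymptotically, and whether the sum clears the threshold $\frac{1}{2}(n-k+1)$ depends on the channel and the chosen rate, so that claim is a design goal rather than something the proposition asserts or that can be proved here.
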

\begin{example}
For a (7,4,3) binary Hamming code which has covering radius~$t_{c}=1$,
we take care of the~$2$~most likely symbols at each of the 7 LRPs.
We see that~$1001001$~is a codeword of this Hamming code and then
form erasure patterns~$1001001\hat{x}_{8}\hat{x}_{9}\ldots\hat{x}_{n}$~with
assumption that the positions are written in increasing reliability
order. The~$2^{R-4}$ sub-erasure patterns~$\hat{x}_{8}\hat{x}_{9}\ldots\hat{x}_{n}$~are
generated randomly using the R-D approach with rate~$(R-4)$.\end{example}
\begin{remrk}
While it also makes sense to use a covering codes for the~$n_{c}$~LRPs
of the erasure patterns and set the the rest to be letter~$1$ (i.e.
chose the most likely symbol as the hard-decision), our simulation
results shows that the performance can be improved by using a combination
of covering codes and random (i.e., generated by the R-D approach)
codes.
\end{remrk}

\subsection{Closed form rate-distortion functions\label{sub:AnalyticalRD}}

For some simple distortion measures, we can compute the R-D functions
analytically in closed form. First, we observe an error pattern as
a sequence of independent but non-identical random sources. Then,
we compute the component R-D functions at each index of the sequence
and use convex optimization techniques to allocate the total rate
and distortion to various components. This method converges to the
solution faster than the numerical method in Section \ref{sec:Computing-RD}.
The following two theorems describe how to compute the R-D functions
for the simple distortion measures of Proposition \ref{prop:bma1}
and \ref{prop:BitASD}.
\begin{thm}
\label{thm:(BMA-RD)}(Conventional error-and-erasure decoding) Let
$p_{i}\triangleq\Pr(x_{i}=1)$, the overall rate-distortion function
is given by~$R(D)=\sum_{i=1}^{n}\left[H(p_{i})-H(\tilde{D}_{i})\right]^{+}$~where
$\tilde{D}_{i}\triangleq D_{i}+p_{i}-1$~and~$\tilde{D}_{i}$~can
be found be a reverse water-filling procedure:\[
\tilde{D}_{i}=\begin{cases}
\lambda & \mbox{if}\,\,\lambda<\min\{p_{i},1-p_{i}\}\\
\min\{p_{i},1-p_{i}\} & \mbox{otherwise}\end{cases}\]
where $\lambda$ should be chosen so that~$\sum_{i=1}^{n}\tilde{D}_{i}=D+\sum_{i=1}^{n}p_{i}-n$.
The~$R(D)$~function can be achieved by the test-channel input-probability
distribution\[
q_{0,i}\triangleq\Pr(\hat{x}_{i}=0)=\frac{1-p_{i}-\tilde{D}_{i}}{1-2\tilde{D}_{i}}\,\,\,\mbox{and}\,\,\, q_{1,i}\triangleq\Pr(\hat{x}_{i}=1)=\frac{p_{i}-\tilde{D}_{i}}{1-2\tilde{D}_{i}}.\]
\end{thm}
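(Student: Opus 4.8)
The plan is to reduce this distortion measure to the standard binary Hamming measure, for which both the rate-distortion function and the optimal test channel are classical. The key observation is that the letter-by-letter measure of Proposition \ref{prop:bma1} decomposes as $\delta(x,\hat{x}) = d_H(x,\hat{x}) + \mathbf{1}[x=0]$, where $d_H$ denotes the Hamming distortion; one checks this against the four table entries directly (the only nonstandard entry, $\delta(0,0)=1$, is exactly the contribution of the indicator term). Taking expectations component-wise, the expected distortion at index $i$ equals the expected Hamming distortion plus the source-only constant $\Pr(x_i=0)=1-p_i$, which does not depend on the reproduction distribution or the test channel.

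Because this additive term depends only on the source, the mutual-information objective defining the R-D function is untouched, so minimizing $I(x^n;\hat{x}^n)$ subject to $\sum_i E[\delta(x_i,\hat{x}_i)]\le D$ is equivalent to minimizing it subject to $\sum_i E[d_H(x_i,\hat{x}_i)]\le D+\sum_i p_i-n$. Writing $\tilde{D}_i$ for the Hamming-distortion budget at component $i$, the per-component total distortion is $D_i=\tilde{D}_i+(1-p_i)$, i.e. $\tilde{D}_i=D_i+p_i-1$ exactly as stated, and the overall constraint becomes $\sum_i\tilde{D}_i=D+\sum_i p_i-n$. Since the source components are independent and the distortion is separable, the R-D function of the product source is additive: it equals the minimum of $\sum_i R_i(\tilde{D}_i)$ over all nonnegative allocations $\{\tilde{D}_i\}$ meeting the total budget, where $R_i(\tilde{D}_i)=[H(p_i)-H(\tilde{D}_i)]^+$ is the classical Hamming R-D function of a Bernoulli$(p_i)$ source, valid for $0\le\tilde{D}_i\le\min\{p_i,1-p_i\}$ with zero rate beyond.

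It remains to carry out the distortion allocation and to identify the achieving channel. I would solve the convex program of minimizing $\sum_i[H(p_i)-H(\tilde{D}_i)]^+$ with $\sum_i\tilde{D}_i$ fixed using a single Lagrange multiplier; because $H$ is concave and strictly increasing on $[0,\nicefrac{1}{2}]$, stationarity equates the slopes $H'(\tilde{D}_i)$ across the active components, which forces a common value $\tilde{D}_i=\lambda$ until a component saturates at its maximum $\min\{p_i,1-p_i\}$ (where $H(\tilde{D}_i)=H(p_i)$ makes its rate vanish). This is precisely the reverse water-filling rule in the statement, with $\lambda$ tuned to meet $\sum_i\tilde{D}_i=D+\sum_i p_i-n$. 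For the test channel, each active component attains its Hamming R-D point through a backward binary symmetric channel $\hat{x}_i\to x_i$ with crossover $\tilde{D}_i$; requiring the induced source marginal to equal Bernoulli$(p_i)$ then forces the reproduction marginal $\Pr(\hat{x}_i=1)=(p_i-\tilde{D}_i)/(1-2\tilde{D}_i)$ and $\Pr(\hat{x}_i=0)=(1-p_i-\tilde{D}_i)/(1-2\tilde{D}_i)$, matching the claimed $q_{1,i}$ and $q_{0,i}$.

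The main obstacle is conceptual rather than computational: recognizing that the nonstandard entry $\delta(0,0)=1$ can be absorbed into a source-only constant, collapsing the problem to the Hamming case. Once that reduction is in place, every remaining ingredient --- additivity of the R-D function for independent sources, the Bernoulli Hamming R-D formula, reverse water-filling, and the backward BSC test channel with its induced reproduction marginal --- is standard, and the only care needed is to track the shift $\tilde{D}_i=D_i+p_i-1$ consistently and to handle the case $p_i>\nicefrac{1}{2}$ through the $\min\{p_i,1-p_i\}$ saturation.
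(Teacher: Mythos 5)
Your proposal is correct, and it reaches the theorem by a genuinely different route than the paper. The paper's proof applies Berger's general parametric method directly to the nonstandard distortion matrix of Proposition \ref{prop:bma1} to obtain the component function $R_{i}(D_{i})=\left[H(p_{i})-H(D_{i}+p_{i}-1)\right]^{+}$ and the test-channel marginals $q_{0,i}=\frac{2(1-p_{i})-D_{i}}{3-2(p_{i}+D_{i})}$, $q_{1,i}=\frac{1-D_{i}}{3-2(p_{i}+D_{i})}$, and then invokes reverse water-filling for the allocation. You instead observe the decomposition $\delta(x,\hat{x})=d_{H}(x,\hat{x})+\mathbf{1}[x=0]$, whose second term has expectation $1-p_{i}$ independent of the test channel, so the whole problem collapses to the classical Bernoulli/Hamming R-D problem with the shifted budget $\tilde{D}_{i}=D_{i}+p_{i}-1$; I checked that your resulting marginals $q_{1,i}=\frac{p_{i}-\tilde{D}_{i}}{1-2\tilde{D}_{i}}$, $q_{0,i}=\frac{1-p_{i}-\tilde{D}_{i}}{1-2\tilde{D}_{i}}$ agree exactly with the paper's after substituting $D_{i}=\tilde{D}_{i}+1-p_{i}$. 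Your route buys transparency: it explains at once why the per-component rate is a Hamming R-D function, why the optimal allocation equalizes the $\tilde{D}_{i}$ across active components (the slope $H'(\tilde{D})$ does not depend on $p_{i}$), and why the achieving channel is a backward BSC with crossover $\tilde{D}_{i}$. The paper's route is more mechanical but generalizes to distortion measures (such as those in Theorems \ref{thm:ExtBMA-1} and \ref{thm:GenASD}, or the bit-level case of Theorem \ref{thm:(bASD-RD)}) that do not admit a Hamming-plus-source-only decomposition. The only point worth making explicit in a full write-up is the feasibility bookkeeping you already allude to: $\tilde{D}_{i}\geq0$ because the minimum achievable per-letter distortion is $1-p_{i}$, and the saturation at $\min\{p_{i},1-p_{i}\}$ (rather than at $p_{i}$) is needed precisely because here $p_{i}$ is typically larger than $\nicefrac{1}{2}$.
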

\begin{proof}
[Sketch of proof] (See \cite{Nguyen-prep08} for details) With the
distortion measure in (\ref{eq:dstfnBMA}), we follow the method in
\cite{Berger-1971} to compute the rate-distortion function component~$R_{i}(D_{i})=\left[H(p_{i})-H(D_{i}+p_{i}-1)\right]^{+}$~and
the test-channel input-probability distribution~$q_{0,i}=\frac{2(1-p_{i})-D_{i}}{3-2(p_{i}+D_{i})}$~and~$q_{1,i}=\frac{1-D_{i}}{3-2(p_{i}+D_{i})}$~for
each index~$i$. Then, one can show that the optimal allocation of
rate and distortion to the various components is given by a reverse-water
filling procedure like in \cite{Cover-1991}.\end{proof}
\begin{thm}
\label{thm:(bASD-RD)}(Bit-level ASD case in Proposition \ref{prop:BitASD})
The overall rate-distortion function in this case is given by~$R(D)=\sum_{i=1}^{N}\left[R_{i}(\lambda)\right]^{+}$~where~$R_{i}(\lambda)=H(p_{i})-H(\frac{1+\lambda}{1+\lambda+\lambda^{2}})+(p_{i}-\frac{1+\lambda}{1+\lambda+\lambda^{2}})H(\frac{\lambda}{1+\lambda})$
~and the distortion component $D_{i}$ is given by\[
D_{i}=\begin{cases}
\frac{1+2\lambda+3\lambda^{2}}{1+\lambda+\lambda^{2}}-p_{i}\frac{1+2\lambda}{1+\lambda} & \mbox{if}\,\, R_{i}(\lambda)>0\\
\min\{1,3(1-p_{i})\} & \mbox{otherwise}\end{cases}\]
where~$\lambda\in(0,1)$~should be chosen so that~$\sum_{i=1}^{N}D_{i}=D$.
The~$R(D)$~function can be achieved by the test-channel input-probability
distribution\[
q_{i,0}\triangleq\Pr(\hat{b}_{i}=0)=\frac{(1+\lambda)-p_{i}(1+\lambda+\lambda^{2})}{1-\lambda^{2}}\,\,\,\mbox{and}\,\,\, q_{i,1}=\Pr(\hat{b}_{i}=1)=\frac{p_{i}(1+\lambda+\lambda^{2})-\lambda(1+\lambda)}{1-\lambda^{2}}.\]
\end{thm}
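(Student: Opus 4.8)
The plan is to follow the same two-stage strategy used for Theorem~\ref{thm:(BMA-RD)}: first solve the per-index (single-letter) rate-distortion problem for the asymmetric binary distortion measure~(\ref{eq:dstfnBGMD}) of Proposition~\ref{prop:BitASD}, and then combine the $N$ independent components through an optimal distortion allocation. Since the bit-level error pattern $b^{N}$ is a product of $N$ independent, non-identical binary sources and the distortion $d(b^{N},\hat b^{N})=\sum_{i}\delta(b_{i},\hat b_{i})$ is additive, Theorem~\ref{thm:(Factored-Blahut)} already guarantees that the overall function decomposes as $R(D)=\sum_{i}R_{i}(D_{i,s})$ and $D=\sum_{i}D_{i,s}$ under a common parameter, with a product-form achieving test channel. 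It therefore suffices to (a) compute each component function $R_{i}$ together with its achieving channel, and (b) fix the shared parameter by the constraint $\sum_{i}D_{i}=D$.

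For step~(a) I would use the parametric (Lagrangian) characterization of the discrete R-D function from~\cite{Berger-1971}. Writing $p_{i}=\Pr(b_{i}=1)$ and $\lambda=e^{s}\in(0,1)$ for the exponential of the slope $s<0$, I set up Berger's fixed-point conditions for the optimal reproduction distribution $(q_{i,0},q_{i,1})$ and the multipliers $c_{x}=\big[\sum_{\hat x}q_{i,\hat x}\,\lambda^{\delta(x,\hat x)}\big]^{-1}$, namely $\sum_{x}\Pr(b_{i}=x)\,c_{x}\,\lambda^{\delta(x,\hat x)}=1$ for each active reproduction letter $\hat x$. Because $\delta(\cdot,0)\equiv 1$ while $\delta(0,1)=3$ and $\delta(1,1)=0$, these two equations become a linear system in $(1-p_{i})c_{0}$ and $p_{i}c_{1}$ whose solution involves $1-\lambda^{3}=(1-\lambda)(1+\lambda+\lambda^{2})$; this is exactly where the factor $1+\lambda+\lambda^{2}$ and the quantity $\tfrac{1+\lambda}{1+\lambda+\lambda^{2}}$ enter. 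Back-substituting recovers the stated test-channel probabilities $q_{i,0}$ and $q_{i,1}$ (one checks they sum to one), the distortion $D_{i}(\lambda)=\tfrac{1+2\lambda+3\lambda^{2}}{1+\lambda+\lambda^{2}}-p_{i}\tfrac{1+2\lambda}{1+\lambda}$, and, through the parametric rate formula $R_{i}=sD_{i}+\sum_{x}\Pr(b_{i}=x)\log c_{x}$, the closed form $R_{i}(\lambda)=H(p_{i})-H\!\big(\tfrac{1+\lambda}{1+\lambda+\lambda^{2}}\big)+\big(p_{i}-\tfrac{1+\lambda}{1+\lambda+\lambda^{2}}\big)H\!\big(\tfrac{\lambda}{1+\lambda}\big)$.

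For step~(b) I would invoke convexity of $R_{i}(D_{i})$ and the KKT/reverse-water-filling argument of~\cite{Cover-1991}: at the optimum every active component shares a common slope $\partial R_{i}/\partial D_{i}=s$, i.e. a common $\lambda$, which is why a single parameter $\lambda\in(0,1)$ drives all components and is then pinned down by $\sum_{i}D_{i}=D$. A component is clipped when its unconstrained solution would demand negative rate: once $R_{i}(\lambda)\le 0$ the index contributes zero rate, so $R_{i}$ enters with a $[\,\cdot\,]^{+}$, and its distortion saturates at the best zero-rate value $\min\{1,3(1-p_{i})\}$, obtained by deterministically sending either the erasure letter (cost $1$) or the reliable letter (expected cost $3(1-p_{i})$); this yields the second branch of $D_{i}$.

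The main obstacle I anticipate is step~(a). Unlike a Hamming-type measure, here every reproduction symbol carries strictly positive distortion ($\delta(0,0)=1$) and the measure is asymmetric ($\delta(0,1)=3\ne\delta(1,0)=1$), so no textbook binary R-D formula applies and one must genuinely solve the coupled fixed-point equations and then verify that the resulting $q_{i,0},q_{i,1}$ remain a valid nonnegative distribution over the admissible range of $\lambda$. A secondary but necessary step is the boundary analysis locating where $R_{i}(\lambda)$ crosses zero, so that the clipping rule, the support $\lambda\in(0,1)$, and the two-branch expression for $D_{i}$ are matched consistently; the clean reverse-water-filling over a single transformed distortion that worked in Theorem~\ref{thm:(BMA-RD)} does not carry over verbatim, which is precisely why the result is naturally stated through the common parameter $\lambda$ rather than an explicit water level.
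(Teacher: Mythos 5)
Your proposal follows essentially the same route as the paper's own (sketched) proof: it applies Berger's parametric Lagrangian method to each independent bit-level component to obtain $R_{i}(\lambda)$, $D_{i}(\lambda)$, and the achieving test-channel distribution, and then uses the Kuhn--Tucker/common-slope argument to allocate rate and distortion across components with a single shared $\lambda$ and clipping at zero rate. Your write-up is in fact more explicit than the paper's sketch about where the factor $1+\lambda+\lambda^{2}$ arises and about the zero-rate saturation value $\min\{1,3(1-p_{i})\}$, and both of those details check out.
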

\begin{proof}
[Sketch of proof] (See \cite{Nguyen-prep08} for details) With the
distortion measure in (\ref{eq:dstfnBGMD}), using the method in \cite{Berger-1971}
we can compute the rate-distortion function component~$R_{i}(\lambda_{i})=H(p_{i})-H(\frac{1+\lambda_{i}}{1+\lambda_{i}+\lambda_{i}^{2}})+(p_{i}-\frac{1+\lambda_{i}}{1+\lambda_{i}+\lambda_{i}^{2}})H(\frac{\lambda_{i}}{1+\lambda_{i}})$~where~$\lambda_{i}$~is
a Lagrange multiplier such that~$D_{i}=\frac{1+2\lambda_{i}+3\lambda_{i}^{2}}{1+\lambda_{i}+\lambda_{i}^{2}}-p_{i}\frac{1+2\lambda_{i}}{1+\lambda_{i}}$~for
each index~$i$. Then, the Kuhn-Tucker conditions define the the
overall rate allocation.
\end{proof}

\section{Simulation results\label{sec:Simulation-results}}

\begin{figure}[t]
\begin{minipage}[c][1\totalheight][t]{0.485\textwidth}%
\includegraphics[width=0.39\paperwidth]{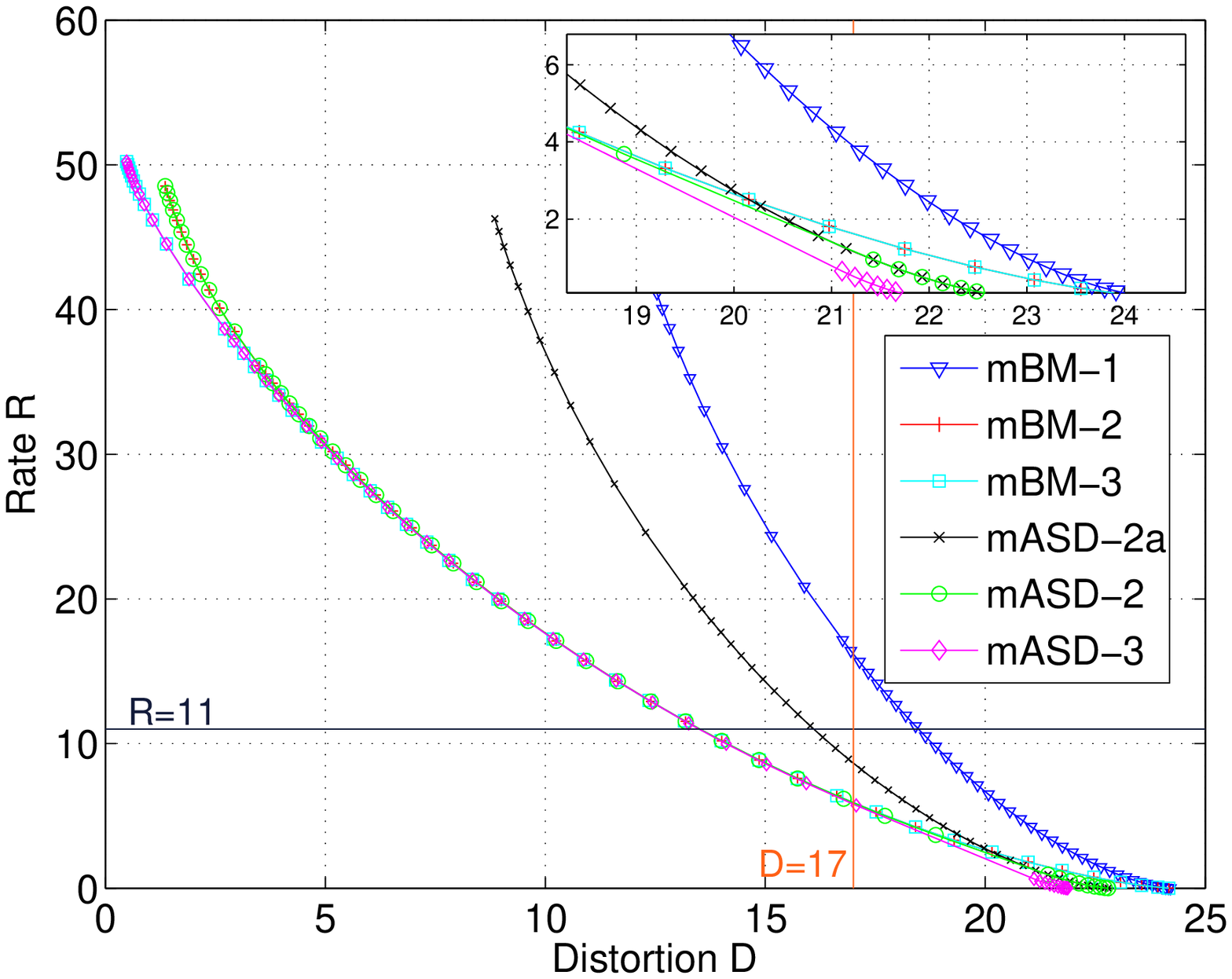}\caption{\label{fig:rdcurve} A realization of R-D curves at~$E_{b}/N_{0}=5.2$dB
for various decoding algorithms for the (255,239) RS code over an
AWGN channel. }
\end{minipage}\hspace{0.5cm}%
\begin{minipage}[c][1\totalheight][t]{0.485\textwidth}%
\includegraphics[width=0.4\paperwidth]{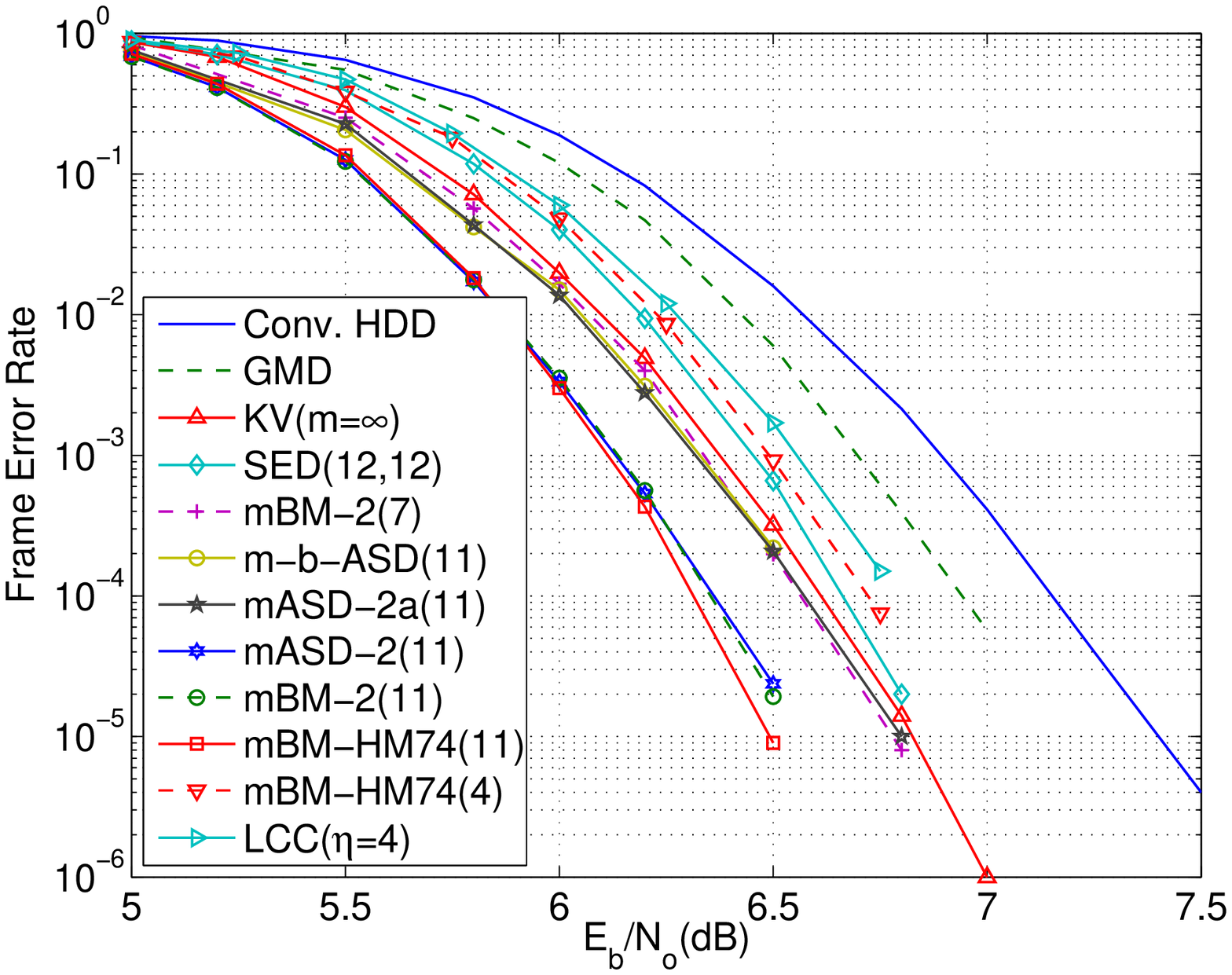}\caption{\label{fig:simulation} Performance of various decoding algorithms
for the (255,239) RS code over an AWGN channel.}
\end{minipage}
\end{figure}

Using simulations, we consider the performance of the (255,239) RS
code over an AWGN channel with BPSK as the modulation format. The
mBM-1 curve corresponds to standard error-and-erasure BM decoding
with multiple erasure patterns. For $l>1$, the mBM-$l$ curves correspond
to error-and-erasure BM decoding with multiple decoding trials using
both erasures and top-$l$ symbols. The mASD-$m$ curves correspond
to multiple ASD decoding trials with maximum multiplicity $m$. The
number of trial decoding patterns is $2^{R}$ where $R$ is denoted
in parentheses in each algorithm's acronym (e.g., m-BM-2(11) uses
$R=11$).

Fig. \ref{fig:rdcurve} shows the R-D curves for various algorithms
at~$E_{b}/N_{o}=5.2$~dB. For this code, the fixed threshold for
decoding is $D=n-k+1=17$. Therefore, one might expect that algorithms
whose average distortion is less than 17 should have a frame error
rate (FER) less than $\frac{1}{2}$. The R-D curve allows one to estimate
the number of decoding patterns required to achieve this FER. Conventional
BM decoding is very similar to mBM-1 decoding at rate 0. Notice that
the mBM-1 algorithm at rate 0, which is very similar to conventional
BM decoding, has an expected distortion of roughly 24. For this reason,
the FER on conventional decoding is close to 1. The R-D curve tells
us that trying roughly $2^{16}$ (i.e., $R=16$) erasure patterns
would reduce the FER to roughly $\frac{1}{2}$ because this is where
the distortion drops down to 17. Likewise, the mBM-2(11) algorithm
has an expected distortion of less than 14. So we expect (and our
simulations confirm) that the FER should be less than $\frac{1}{2}$.
One weakness of this approach is that the R-D describes only the average
distortion and does not directly consider the probability that the
distortion is greater than 17. Still, we can make the following observations
from the R-D curve. Even at low rates (e.g., $R\ge4$), we see that
the distortion~$D$~achieved by mBM-2 is roughly the same as mBM-3,
mASD-2, and mASD-3 but smaller than mASD-2a and mBM-1. This implies
that mBM-2 is no worse than the more complicated ASD based approaches
for a wide range of rates (i.e., $4\leq R\leq35$). 

The FER of various algorithms can be seen in Fig. \ref{fig:simulation}.
The focus on $R=11$ allows us to make fair comparisons with SED(12,12).
With the same number of decoding trials, mBM-2(11) outperforms SED(12,12)
by 0.3 dB at an FER$=10^{-4}$. Even mBM-2(7), with many fewer decoding
trials, outperforms both SED(12,12) and the KV algorithm with~$m=\infty$.
Among all our proposed algorithms with rate~$R=11$, the mBM-HM74(11)
achieves the best performance. This algorithm uses the Hamming (7,4)
covering code for the 7 LRPs and the R-D approach for the remaining
codeword positions. Meanwhile, small differences in the performance
between mBM-2(11), mBM-3(11), mASD-2(11), and mASD-3(11) suggest that:
(i) taking care of the~$2$~most likely symbols at each codeword
position is good enough for multiple decoding of high-rate RS code
and (ii) multiple runs of error-and-erasure decoding is almost as
good as multiple runs of ASD decoding. Recall that this result is
also correctly predicted by the R-D analysis. Moreover, it is quite
reasonable since we know that the gain of GS decoding, with infinite
multiplicity, over the BM algorithm is negligible for high-rate RS
codes. To compare with the LCC($\eta=4$) Chase-type approach used
in \cite{Bellorado-isit06}, we also consider the mBM-HM74(4) algorithm,
which uses the Hamming (7,4) covering codes for the 7 LRPs and the
hard decision pattern for the remaining codeword positions. This shows
that the covering code achieves better performance with the same number
($2^{4}$) decoding attempts. The comparison is not entirely fair,
however, because of their low-complexity approach to multiple decoding.
We believe, nevertheless, that their technique can be generalized
to covering codes.

\section{Conclusion\label{sec:Conclusion}}

A rate-distortion approach is proposed as a unified framework to analyze
multiple decoding trials, with various algorithms, of RS codes in
terms of performance and complexity. A connection is made between
the complexity and performance (in some asymptotic sense) of these
multiple-decoding algorithms and the rate and distortion of an associated
R-D problem. Covering codes are also combined with the rate-distortion
approach to mitigate the suboptimality of random codes when the effective
block-length is not large. As part of this analysis, we also present
numerical and analytical computations of the rate-distortion function
for sequences of independent but non-identical sources. Finally, the
simulation results show that our proposed algorithms based on the
R-D approach achieve a better performance-versus-complexity trade-off
than previously proposed algorithms. One key result is that, for high-rate
RS codes, multiple-decoding using the standard BM algorithm is as
good as multiple-decoding using more complex ASD algorithms.

In this paper, we only discuss the rate-distortion approach to solve
the problem in (\ref{eq:ProbStatement}). However, the performance
can be further improved by focusing on the rate-distortion error-exponent.
This allows us to approximately solve the covering problem for finite~$n$
rather than just as~$n\rightarrow\infty$. The complexity of multiple
decoding can also be decreased by using clever techniques to lower
the complexity per decoding trial (e.g., \cite{Bellorado-isit06}).
We will address these two improvements in a future paper.

\appendices

\section{Proof of Theorem \ref{thm:(Factored-Blahut)}\label{sec:AppBlahut}}

First, let us recall that for each source component~$x_{i}$, the
B-A algorithm computes the R-D pair in the following steps:
\begin{enumerate}
\item Choose an arbitrary all-positive test-channel input-probability distribution
vector~$\underline{q}^{(0)}$.
\item Iterate the following steps at~$t=1,2,\ldots$\[
Q_{k_{i}|j_{i}}^{(t)}=\frac{q_{k_{i}}^{(t)}\exp(s\delta_{j_{i}k_{i}})}{\sum_{k_{i}}q_{k_{i}}^{(t)}\exp(s\delta_{j_{i}k_{i}})}\,\,\,\mbox{and\ensuremath{\,\,\,}}\sum_{j_{i}}p_{j_{i}}Q_{k_{j}|j_{i}}^{(t)}=q_{k_{i}}^{(t+1)}\]
where~$Q_{k_{i}|j_{i}}=\Pr(\hat{x}_{i}=k_{i}|x_{i}=j_{i})$~is the
transition probability. It is shown by B-A that~$q_{k_{i}}^{(t)}\rightarrow q_{k_{i}}^{\star}$~and~$Q_{k_{i}|j_{i}}^{(t)}\rightarrow Q_{k_{i}|j_{i}}^{\star}$as~$t\rightarrow\infty$.
\end{enumerate}
The rate and distortion can be computed by~$R_{i,s}=\sum_{j_{i}}\sum_{k_{i}}p_{j_{i}}Q_{k_{i}|j_{i}}^{\star}\log\frac{Q_{k_{i}|j_{i}}^{\star}}{\sum_{j_{i}}p_{j_{i}}Q_{k_{i}|j_{i}}^{\star}}$~and~$\sum_{i=1}^{n}D_{i,s}=\sum_{j_{i}}\sum_{k_{i}}p_{j_{i}}Q_{k_{i}|j_{i}}^{\star}\rho_{j_{i}k_{i}}$

Now, we will prove Theorem \ref{thm:(Factored-Blahut)}. Since the
input-distribution vector of the test channel is an arbitrary all-positive
vector, we choose~$q_{K}^{(0)}$~so that it can be factorized as
follows~$q_{K}^{(0)}=\prod_{i=1}^{n}q_{k_{i}}^{(0)}$ .

Suppose after step~$t$, we have~$q_{K}^{(t)}=\prod_{i=1}^{n}q_{k_{i}}^{(t)}$~then
by the iterative computing process in the B-A algorithm we have \[
Q_{K|J}^{(t)}=\frac{q_{K}^{(t)}\exp(s\delta_{JK})}{\sum_{K}q_{K}^{(t)}\exp(s\delta_{JK})}=\frac{\prod_{i=1}^{n}q_{k_{i}}^{(t)}\exp(s\delta_{j_{i}k_{i}})}{\sum_{k_{1}}\ldots\sum_{k_{n}}\prod_{i=1}^{n}q_{k_{i}}^{(t)}\exp(s\delta_{j_{i}k_{i}})}=\prod_{i=1}^{n}\frac{q_{k_{i}}^{(t)}\exp(s\delta_{j_{i}k_{i}})}{\sum_{k_{i}}q_{k_{i}}^{(t)}\exp(s\delta_{j_{i}k_{i}})}=\prod_{i=1}^{n}Q_{k_{i}|j_{i}}^{(t)}\]

\[
q_{K}^{(t+1)}=\sum_{J}p_{J}Q_{K|J}^{(t)}=\sum_{j_{1}}\ldots\sum_{j_{n}}\prod_{i=1}^{n}p_{j_{i}}Q_{k_{i}|j_{i}}^{(t)}=\prod_{i=1}^{n}\sum_{j_{i}}p_{j_{i}}Q_{k_{j}|j_{i}}^{(t)}=\prod_{i=1}^{n}q_{k_{i}}^{(t+1)}\]

where~$p_{J}=\Pr(x^{n}=J)=\prod_{i=1}^{n}\Pr(x_{i}=j_{i})=\prod_{i=1}^{n}p_{j_{i}}$~follows
from the independence of source components.

Hence, by induction we know the factorization rule still applies after
every step~$t$~in the process. Therefore, we have\[
q_{K}^{\star}=\lim_{t\rightarrow\infty}q_{K}^{(t)}=\lim_{t\rightarrow\infty}\prod_{i=1}^{n}q_{k_{i}}^{(t)}=\prod_{i=1}^{n}\lim_{t\rightarrow\infty}q_{k_{i}}^{(t)}=\prod_{i=1}^{n}q_{k_{i}}^{\star}\,\,\mbox{\ensuremath{\,\,} and}\,\,\,\, Q_{K|J}^{\star}=\lim_{t\rightarrow\infty}Q_{K|J}^{(t)}=\lim_{t\rightarrow\infty}\prod_{i=1}^{n}Q_{k_{i}|j_{i}}^{(t)}=\prod_{i=1}^{n}\lim_{t\rightarrow\infty}Q_{k_{i}|j_{i}}^{(t)}=\prod_{i=1}^{n}Q_{k_{i}|j_{i}}^{\star}\]

and then we can show that\[
R_{s}=\sum_{J}\sum_{K}p_{J}Q_{K|J}^{\star}\log\frac{Q_{K|J}^{\star}}{\sum_{J}p_{J}Q_{K|J}^{\star}}=\sum_{j_{1},\ldots,j_{n}}\sum_{k_{1},\ldots,k_{n}}\prod_{i=1}^{n}p_{j_{i}}Q_{k_{i}|j_{i}}^{\star}\sum_{i=1}^{n}\log\frac{Q_{k_{i}|j_{i}}^{\star}}{\sum_{j_{i}}p_{j_{i}}Q_{k_{i}|j_{i}}^{\star}}=\sum_{i=1}^{n}\sum_{j_{i}}\sum_{k_{i}}p_{j_{i}}Q_{k_{i}|j_{i}}^{\star}\log\frac{Q_{k_{i}|j_{i}}^{\star}}{\sum_{j_{i}}p_{j_{i}}Q_{k_{i}|j_{i}}^{\star}}=\sum_{i=1}^{n}R_{i,s}\]

\[
D_{s}=\sum_{J}\sum_{K}p_{J}Q_{K|J}^{\star}\rho_{JK}=\sum_{j_{1},\ldots,j_{n}}\sum_{k_{1},\ldots,k_{n}}\prod_{i=1}^{n}p_{j_{i}}Q_{k_{i}|j_{i}}^{\star}\sum_{i=1}^{n}\rho_{j_{i}k_{i}}=\sum_{i=1}^{n}\sum_{j_{i}}\sum_{k_{i}}p_{j_{i}}Q_{k_{i}|j_{i}}^{\star}\rho_{j_{i}k_{i}}=\sum_{i=1}^{n}D_{i,s}.\]

\bibliographystyle{IEEEtran}

\begin{thebibliography}{18}

\bibitem{Guruswami-it99}
V.~Guruswami and M.~Sudan, ``Improved decoding of Reed-Solomon and Algebraic-Geometry codes,''
  \emph{IEEE Trans. Inform. Theory}, vol.~45, no.~6, pp. 1757--1767, Sept. 1998.

\bibitem{Koetter-it03}
R.~Koetter and A.~Vardy, ``Algebraic soft-decision decoding of Reed-Solomon codes,'' \emph{IEEE Trans. Inform. Theory}, vol.~49, no.~11, pp. 2809--2825, Nov. 2003.

\bibitem{Forney-it66}
G.~D. Forney, ``Generalized minimun distance decoding,''
  \emph{IEEE Trans. Inform. Theory}, vol.~12, no.~2, pp. 125--131, April 1966.

\bibitem{Lee-globecom08}
S.~-W. Lee and B.~V.~K.~V. Kumar, ``Soft-decision decoding of Reed-Solomon codes using successive error-and-erasure decoding,'' in \emph{Proc. IEEE Global Telecom. Conf.}, New Orleans, LA, Nov. 2008, pp. 1--5.

\bibitem{Chase-it72}
D.~Chase, ``A class of algorithms for decoding block codes with channel measurement information,'' \emph{IEEE Trans. Inform. Theory}, vol.~18, no.~1, pp. 170--182, Jan. 1972.

\bibitem{Tang-comlett01}
M.~F.~H. Tang, Y.~Liu and S.~Lin, ``On combining Chase-2 and GMD decoding algorithms for nonbinary block codes,'' \emph{IEEE Commun. Letters}, vol.~5, no.~5, pp. 209--211, May 2001.

\bibitem{Bellorado-isit06}
J.~Bellorado and A.~Kav\v{c}i\'{c}, ``A low-complexity method for Chase-type decoding of Reed-Solomon codes,'' in \emph{Proc. IEEE Int. Symp. Information Theory}, Seattle, USA, 2006, pp. 2037--2041.

\bibitem{Xia-com07}
H.~Xia and J.~R. Cruz, ``Reliability-based forward recursive algorithms for algebraic soft-decision decoding of Reed-Solomon codes,'' \emph{IEEE Trans. Commun.}, vol.~55, no.~7, pp. 1273--1278, 2007.

\bibitem{Xia-icc08}
H.~Xia, H.~Wang, and J.~R. Cruz, ``A Chase-GMD algorithm for soft-decision decoding of Reed-Solomon codes on perpendicular channels,'' in \emph{Proc. IEEE Int. Conf. Commun.}, Beijing, China, May 2008, pp. 1977--1981.

\bibitem{Lin-1983}
S.~Lin and D.~J. Costello, Jr., \emph{Error Control Coding: Fundamentals and Applications}. Englewood Cliffs, NJ, USA: Prentice Hall, 1983, iSBN 0-13-283796-X.

\bibitem{Blahut-2003}
R.~E. Blahut, \emph{Algebraic Codes for Data Transmission}. Cambridge University Press, 2003, iSBN-10 0521553741.

\bibitem{Blahut-it72}
------, ``Computation of channel capacity and rate distortion functions'' \emph{IEEE Trans. Inform. Theory}, vol.~18, no.~4, pp.
  460--473, July 1972.

\bibitem{Arimoto-it72}
S.~Arimoto, ``An algorithm for computing the capacity of arbitrary discrete memoryless channels,'' \emph{IEEE Trans. Inform. Theory}, vol.~18,
  no.~1, pp. 14--20, Jan. 1972.

\bibitem{Jiang-it08}
J.~Jiang and K.~R. Narayanan, ``Algebraic soft-decision decoding of Reed-Solomon codes using bit-level soft information,'' \emph{IEEE
  Trans. Inform. Theory}, vol.~54, no.~9, pp. 3907--3928, Sept. 2008.

\bibitem{McEliece-IPN03}
R.~J. McEliece, ``The Guruswami-Sudan decoding algorithm for Reed-Solomon codes,'' \emph{JPL Interplanetary Network Progress Report}, pp. 42--153, May 2003.

\bibitem{Nguyen-prep08}
P.~S. Nguyen, H.~D. Pfister and K.~R. Narayanan, ``On multiple decoding attempts for Reed-Solomon codes,'' 2009, in preparation.

\bibitem{Berger-1971}
T.~Berger, \emph{Rate Distortion Theory}. Englewood Cliffs, NJ, USA: Prentice Hall, Inc., 1971, iSBN 13-753103-6.

\bibitem{Cover-1991}
T.~M. Cover and J.~A. Thomas, \emph{Elements of Infomation Theory}. Wiley, 1991.

\end{thebibliography}

\end{document}